\definecolor{gray97}{gray}{.97}
\definecolor{gray75}{gray}{.45}
\definecolor{gray90}{gray}{.85}
\theoremstyle{plain}
\newtheorem{prop}{Proposition}[section]
\newtheorem{teor}{Theorem}[section]
\theoremstyle{definition}
\newcommand{\espacio}{\text{\textvisiblespace}}
\newcommand{\NN}{\mathbb{N}}
\newcommand{\Z}{\mathbb{Z}}
\newcommand{\A}{\mathcal A}
\renewcommand{\phi}{\varphi}
\title{Nontrivial Turmites are Turing-universal}
\author{Diego Maldonado$^\dag$\\
LIFO, Universit\'e d'Orl\'eans, France\and\\
Anah\'i Gajardo$^\dag$\footnote{The author thanks the CMM and CI$^2$MA CONICYT Basal project PFB03.}\\
DIM \& CI$^2$MA, Universidad de Concepci\'on, Chile\and\\
Benjamin Hellouin de Menibus\footnote{The author acknowledges the financial support of Basal project No. PFB-03 CMM, Universidad de Chile.}\\
Center for Mathematical Modeling, Universidad de Chile, Chile\\
Departamento de Matem\'aticas, Universidad Andres Bello, Chile\and\\
Andr\'es Moreira\footnote{The authors thank the support of the FONDECYT project\#1140833 and the ECOS-Sud project \#C12E05.}\ $^,$ \footnote{\tt Corresponding author:amoreira@inf.utfsm.cl}\\
Departamento de Inform\'atica, Universidad T\'ecnica Federico Santa Mar\'ia, Chile}
\begin{document}

\maketitle
\begin{abstract}
A Turmit is a Turing machine that works over a two-dimensional grid, that is, an agent that moves, reads and writes symbols over the cells of the grid. 
Its state is an arrow and, depending on the symbol that it reads, it turns to the left or to the right, switching the symbol at the same time.
Several symbols are admitted, and the rule is specified by the turning sense that the machine has over each symbol.
Turmites are a generalization of Langton's ant, and they present very complex and diverse behaviors.
We prove that any Turmite, except for those whose rule does not depend on the symbol, can simulate any Turing Machine.
We also prove the \textbf{P}-completeness of prediction their future behavior by explicitly giving a log-space reduction from the \emph{Topological Circuit Value Problem}.
A similar result was already established for Langton's ant; here we use a similar technique but prove a stronger notion of simulation, and for a more general family.
\end{abstract}
\emph{(Turing-universality, P-completeness, One Head Machines, Unconventional Computation)}


Langton's ant \cite{Langton86} is a Turing machine evolving on a two-dimensional grid where each vertex contains a white or black symbol. 
The ``ant" is the head of the machine and is in one state corresponding to its current direction (N, S, E, O). 
At each time step, the ant moves one step forward and turns according to the symbol on the new vertex, changing this symbol.

Despite the simplicity of the rules governing its dynamics, the ant presents extremely complex trajectories. 
From a white configuration, the ant initially exhibits a simple, almost symmetric behavior, but then passes through the path it has drawn in an apparently disordered, chaotic fashion, before entering in a recurrent behavior that takes it away from the initial position. 
It remains an open problem if this periodic asymptotic behavior is the same for any initial configuration with finitely many black symbols.

Various generalizations of this model have been suggested: several ants \cite{Langton86}, other regular grids or finite grids \cite{GGM01}, more than two symbols \cite{GalProSutTro95a.mi}. 
Among the latter are the Turmites, first discussed in~\cite{Dewd89} and credited to Greg Turk. Here the symbols are taken from $\Z/n\Z$ and each one corresponds to turning right (R) or left (L). 
When the Turmite visits a cell it turns according to the symbol in the cell and add one to this. 
These Turmites preserve several properties of Langton's ant, such as reversibility and unboundness, but they exhibit much diverse and complex asymptotic behaviors.

In this paper we prove that, with the exception of two rules with trivial dynamics, all Turmites can perform universal computation in the sense used in~\cite{Cook04a} for the Rule 110 cellular automaton: that is, we prove that given a 1D Turing machine, there exists a periodic configuration of the Turmite space which can be perturbed by a finite pattern, dependent on any given finite input word, such that the Turmite simulates the Turing machine over that word. 

With a similar technique, we prove that the problem of deciding if a given cell is visited by the trajectory of the Turmite before some time is \textbf{P}-complete. \textbf{P}-completeness is regularly used in similar contexts to show that the problem of predicting the future behavior of a system is intrinsically hard to parallelize (under unproven but widely held assumptions in complexity theory). An exposition of these concepts can be found in \cite{Greenlaw95}. \textbf{P}-completeness results have be proven in this way in, for example, the Rule 110 cellular \cite{NearyWoods}, two-dimensional cellular automata such as the Game of Life~\cite{DurRok99} and the original Langton's ant \cite{GGM01}. 

The fundamental idea of this proof was introduced by Banks \cite{AITR-233} and consists in the simulation of logical circuits. A similar technique was used for two-dimensional cellular automata (see \cite{GGchaitin} for a review on the subject). Turmites (and Langton's ant) work in a slightly different way than these previous examples, because it is the Turmite that needs to go over the cellular space to perform the computation of each logical gate.
In this work, we use the ideas from~\cite{GGM01} but creating very particular gadgets that allow a construction that works for all the Turmites at the same time.
Moreover, the infinite circuit that simulates a Turing machine was modified such that now it is completely periodic except for a finite number of cells.
The proof of {\bf P}-completeness is done by reduction from the \emph{Topological circuit value problem} with a detailed and explicit construction, that is proved to be computable in logarithmic space.

The formal definition of a Turmite is presented in Section~\ref{sec:def}.
Section~\ref{sec: sistema abstracto} is devoted to the definition of a set of computational devices from which one can simulate any Turing machine.
We prove that these devices can be implemented by any Turmite in Section~\ref{sec:Turmit}, establishing their universality.
Finally, in Section~\ref{sec:Phardness} the proof of P-completeness is developed.

\section{Langton's ant and Turmites}\label{sec:def}

As defined in~\cite{GalProSutTro95a.mi}, a Turmite is an automaton with state set $Q=\{\rightarrow,\uparrow,\leftarrow,\downarrow\}$ and alphabet $\A = \{0,1,...,n-1\}=\Z/\Z_n $, that moves over the grid $\Z^2$.
In this setting, a configuration is an element from $\A^{\Z^2}\times \Z^2\times Q$, that represents:

\begin{itemize}
\item an assignment of symbols from $\A$ to each cell in $\Z^2$,
\item a marked cell representing the position of the Turmite, and
\item the current state of the Turmite.
\end{itemize}

At each cell the Turmite turns either to the right or to the left, depending on its \emph{rule} and the symbol of the cell where it lands, it augments the symbol by 1 and then it moves one cell forward.
Let us note that the system is invariant under rotations of the configuration and the Turmite state, but not under reflections.

A particular Turmite rule is characterized by a word $w\in\{L,R\}^n$ that assigns to each symbol a turning direction.
Formally, for a given configuration $(C,(i,j),d)\in\A^{\Z^2}\times \Z^2\times Q$, the global transition function of the rule $w$ is defined by $T_w(C,(i,j),d)=(C',(i',j'),d')$, where:
\begin{itemize}
\item $C'(k,l)=\left\{\begin{array}{ll}
  C(k,l)+1\mod n & \text{if } (k,l)=(i,j)\\
  C(k,l)   & \text{otherwise} 
\end{array}\right.$,
\item $d'$ is a rotation of $d$ by $90^\circ$ clockwise if $w_{C(i,j)}=R$, or counterclockwise if $w_{C(i,j)}=L$,
\item $(i',j')=(i,j)+d'$.
\end{itemize}

An illustration of the dynamics of rule $RRL$ is shown in Figure~\ref{fig: ejem RRL}. 

\begin{figure}[H]
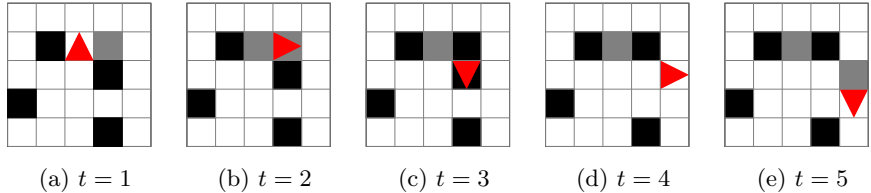

  \centering
  \begin{subfigure}[b]{0.15\textwidth}
    \centering
        {
          \includestandalone[width=0.9\textwidth]{img/t1}
        }
        \caption{$t=1$}     
  \end{subfigure}
  \begin{subfigure}[b]{0.15\textwidth}
    \centering
        {
          \includestandalone[width=0.9\textwidth]{img/t2} 
        }     
        \caption{$t=2$}
  \end{subfigure}
  \begin{subfigure}[b]{0.15\textwidth}
    \centering
        {
          \includestandalone[width=0.9\textwidth]{img/t3}
        }
        \caption{$t=3$}
  \end{subfigure}   
  \begin{subfigure}[b]{0.15\textwidth}
    \centering
        {
          \includestandalone[width=0.9\textwidth]{img/t4}
        }
        \caption{$t=4$}
  \end{subfigure} 
  \begin{subfigure}[b]{0.15\textwidth}
    \centering
        {
          \includestandalone[width=0.9\textwidth]{img/t5}
        }
        \caption{$t=5$}
  \end{subfigure} 
\caption{2 steps of the Turmite $RRL$.}\label{fig: ejem RRL}
\end{figure}

The Turmites whose turning direction is independent of the symbol are \emph{trivial}, since they will have a 4-periodic behavior independently of the initial configuration.
Turmites are a particular case of a Turing machine over a two-dimensional tape.
Non trivial Turmites have two important properties, proved in~\cite{GalProSutTro95a.mi}: they are \emph{reversible}, the previous state of each configuration can be uniquely determined, and their trajectories are always \emph{unbounded}, no periodic point can appear.

Langton's ant is the Turmite with rule $RL$.
It was introduced by Christopher Langton in~\cite{Langton86} together with other paradigmatic automata that emulate artificial life.
Figure~\ref{fig:Lang}) is the configuration obtained at iteration 11,200 when starting from the white con\-fi\-gu\-ra\-tion.
The repetitive pattern is produced after iteration 10,100, and it corresponds to a regular movement of period 104, known as the ``highway".
This particular asymptotic behavior has been experimentally observed over every finite initial configuration, but no one knows whether it will really occur in \emph{every} case, which is a long standing open problem.
In~\cite{GGM02} it was proved that universal computation can be performed with this ant. Since this is done by drawing a particular infinite initial configuration, it has no implications on the question about the highway, but it does justify in some sense the difficulties encountered when studying this system.

\begin{figure}[H]
  \centering
  \begin{subfigure}[b]{0.32\textwidth}
    \centering
        {
        \centering
        \includegraphics[width= 0.9\textwidth]{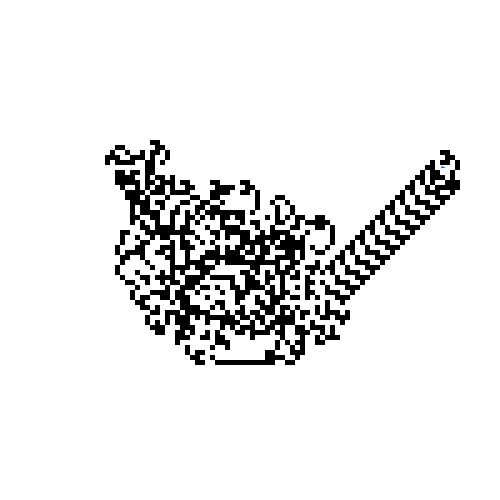}
        \caption{Rule RL, iteration 11200}\label{fig:Lang}
        }
  \end{subfigure}
  \begin{subfigure}[b]{0.32\textwidth}
    \centering
        {
        \centering
        \includegraphics[width= 0.9\textwidth]{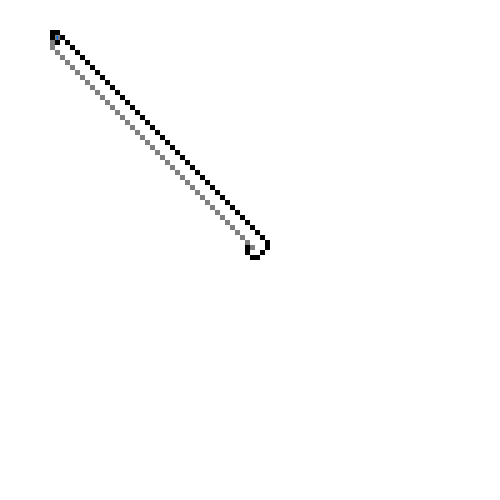}
        \caption{Rule RRL, iteration 786}\label{fig:RRL}
        }
  \end{subfigure}
  \begin{subfigure}[b]{0.32\textwidth}
    \centering
        {
        \centering
        \includegraphics[width= 0.9\textwidth]{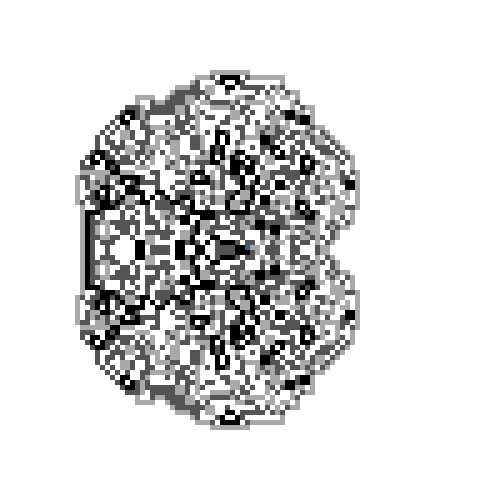}
        \caption{Rule RRLL, iteration 281498}
        }
  \end{subfigure}
  \caption{Turmite simulations: at the beginning all the cells are in state 0.}\label{fig:simulations}
\end{figure}

Other Turmites exhibit much more complex and diverse behaviors. Some of them build highways like Langton's ant, but with simpler patterns (as does rule $RRL$ in Figure~\ref{fig:RRL})), other grow squares instead, others have bilateral symmetry, etc.~\cite{GalProSutTro95a.mi}.


\section{Simulating universal computation \label{sec: sistema abstracto}}

We introduce a model of computation consisting in two basic elements, \emph{wires} and \emph{boxes}. 
The wires are used to transmit signals but can be only used once (see Figure~\ref{fig:wire}); they can be rotated and flexed.

The boxes are computational devices with a state ($0$ or $1$) that are connected to five (or three) wires (see Figure~\ref{fig:caja}).
When a box in state $0$ receives a signal from wire $1$, it flips from $0$ to $1$ and the signal exits through wire $2$.
The input wire 1 can only be used when the box is in state $0$.
Furthermore, a signal entering from wire $3$ will exit through wire $4$ if the current state is $0$, and through wire $5$ otherwise. 
The simplest such device is the \emph{input box}, where the wires 1 and 2 are left unused (Figure~\ref{fig: caja de entrada}).
These devices are represented in Figure~\ref{fig: elementos puertas cajas}.
They can be rotated but not reflected, that is, when enumerating the cables clockwise, we will always read 1-2-5-3-4.

\begin{figure}
  \centering
  \begin{subfigure}[b]{0.33\textwidth}
    \centering
	{
          \includegraphics[width=0.9\textwidth]{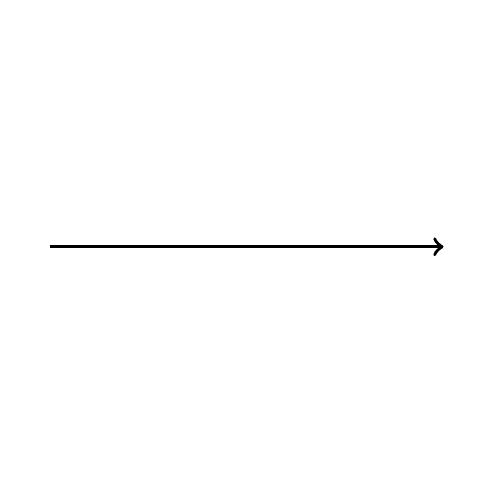}
	}
        \caption{Wire.} \label{fig:wire}               
  \end{subfigure}
  \begin{subfigure}[b]{0.33\textwidth}
    \centering
	{
	  \includegraphics[width=0.9\textwidth]{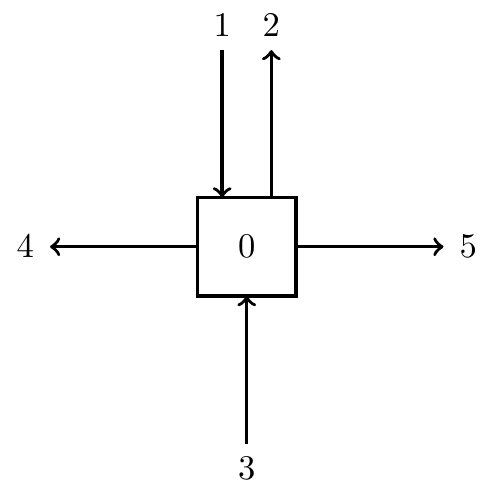}
	}
        \caption{Box in state $0$.} \label{fig:caja}     
  \end{subfigure}
  \begin{subfigure}[b]{0.33\textwidth}
    \centering
	{
	  \includegraphics[width=0.9\textwidth]{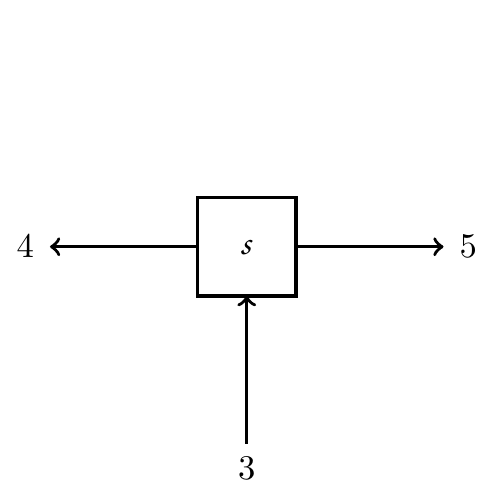}
	}
	\caption{Input box in state $s\in \{0,1\}$.}\label{fig: caja de entrada}
  \end{subfigure}
  \caption{Diagrams of the wire and the boxes used in the construction of logic gates.}\label{fig: elementos puertas cajas}
\end{figure}

Our system is intended to work in a two-dimensional space, which implies the need for a device to cross wires, and also another to join two wires into one; therefore, we introduce a \emph{cross} and a \emph{union}.
The \emph{union} has two input wires and one output wire; a signal coming from any of the input wires will exit through the output wire. 
The \emph{cross} allows two wires $1$ and $2$ to cross in such a way that their signals do not interact. This device works as long as wire $1$ is used {\em before} wire $2$, that is, wire $2$ can be used only if wire $1$ has been used before. 
Since this device can be rotated and reflected, wire 1 and 2 can be in any position.
Diagrams of these devices are shown in Figure \ref{fig: cruce}.

\begin{figure}
  \centering
  \begin{subfigure}[b]{0.33\textwidth}
    \centering
    \includegraphics[width=0.9\textwidth]{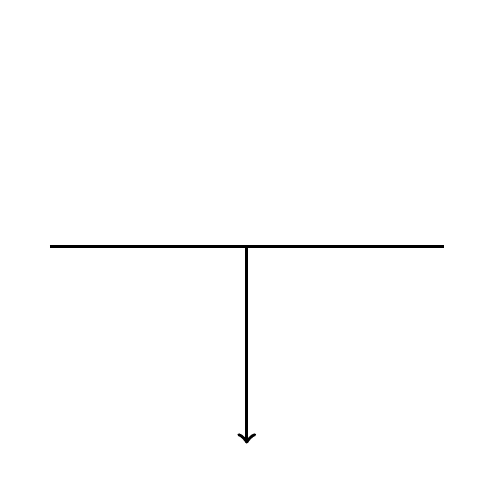}
    \caption{Union.}
  \end{subfigure}
  \begin{subfigure}[b]{0.33\textwidth}
    \centering
    \includegraphics[width=0.9\textwidth]{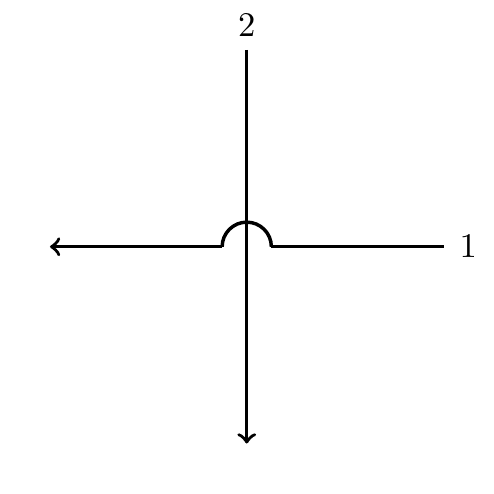}
    \caption{Cross.}\label{fig: cruce}
  \end{subfigure}
  \caption{Additional elements in the Wires and Boxes system. In the Cross, the wire $1$ will be always distinguished with a semicircle.}\label{fig: adicionales}
\end{figure}

From these elements we can build any logical gate, such as, for example, the NAND gate ($p\sim\!\!\wedge q \Leftrightarrow\ \sim(p\wedge q)$) or the NOT gate ($\sim$). 
Their construction is shown in Figure~\ref{fig: logic gates}.
Notice that the computation itself is due to the action of the boxes, and the signal coming from the left only triggers the computation by passing through the different boxes. 
From case to case analysis we can check that for any initial values of $p$ and $q$ in the upper boxes, the state of the lower box after the signal passed through will be $NAND(p,q)$ and $NOT(p)$, respectively.
Let us remark that since the boxes cannot be reflected, these gates cannot be reflected neither.
The NAND and the NOT gate can be only computed by a signal that enters by the left.

\begin{figure}
  \centering
  \begin{subfigure}[b]{0.64\textwidth}
    \centering
    \includegraphics[height=5cm]{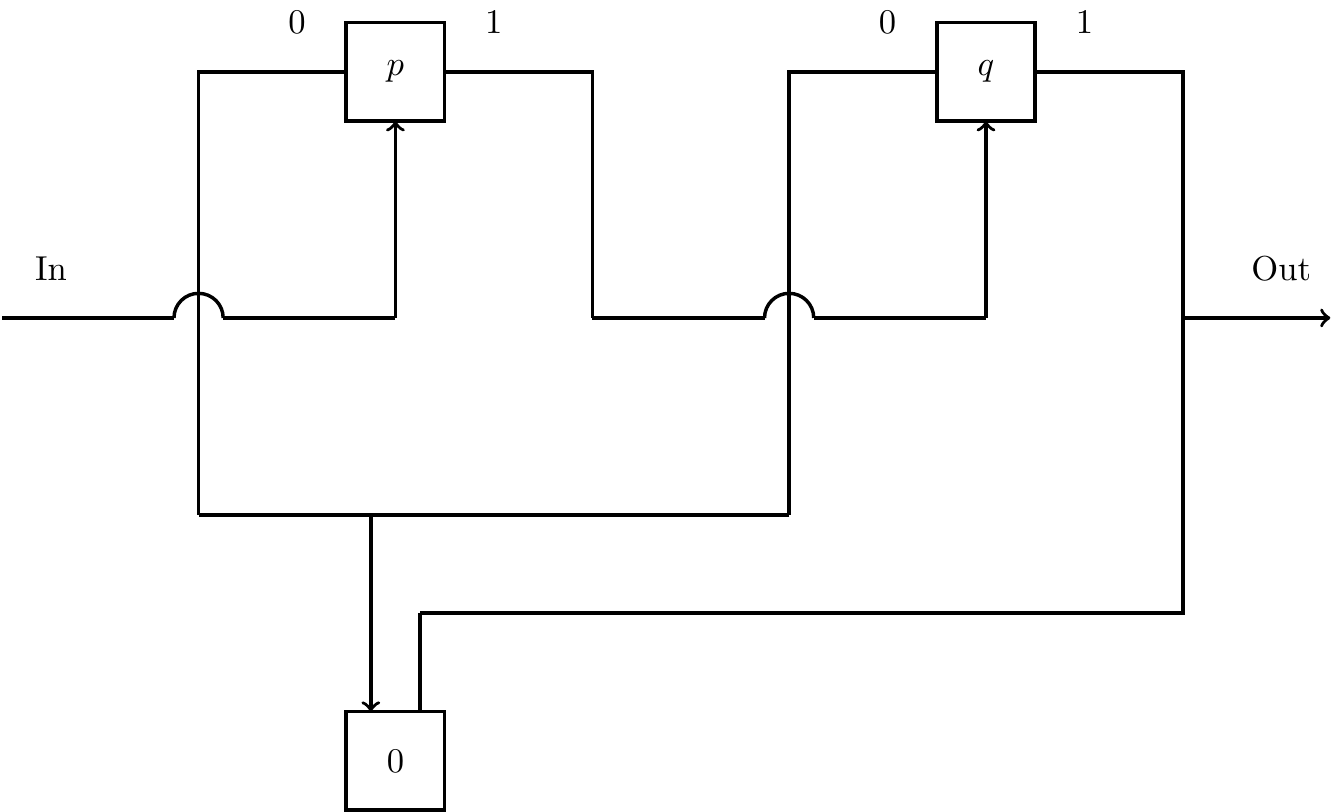}
    \caption{NAND gate.}                
  \end{subfigure}
  \begin{subfigure}[b]{0.34\textwidth}
    \centering    
    \includegraphics[height=5cm]{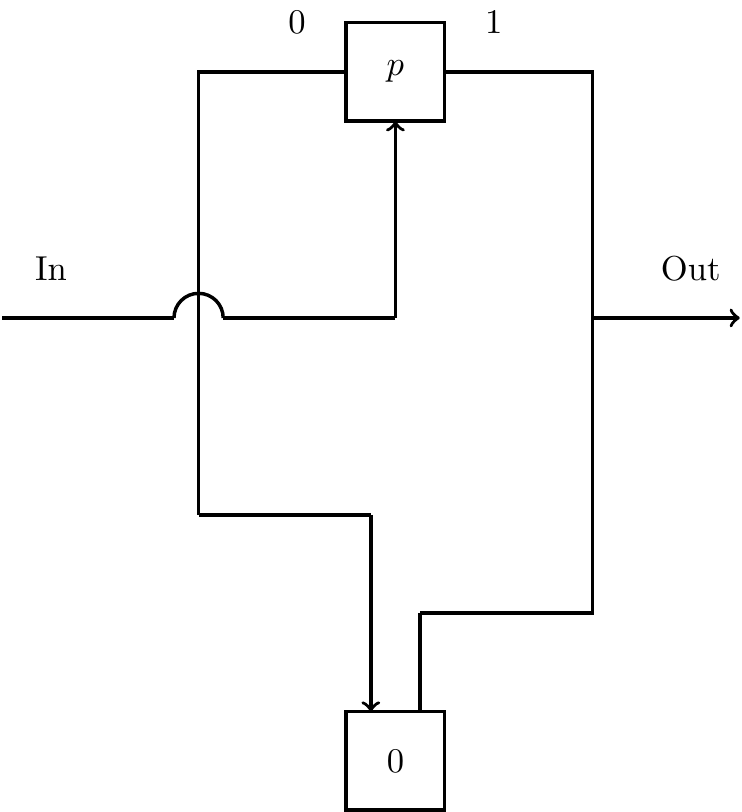}    
    \caption{NOT gate.}\label{fig: NOT gate}                
  \end{subfigure}
  \caption{Examples of logic gates built from wires and boxes.}\label{fig: logic gates}
\end{figure}

In order to perform compositions of logic gates, we should connect them by using the output (lower) box of a gate as input (upper) box of another gate. 
However the gates of Figure~\ref{fig: logic gates} only work with a signal going from left to right. 
To fix this issue, we design a special gate COPY that copies the state from one box to another while allowing the signal to travel from right to left. 
This gate can be also used to connect gates that are arbitrarily separated, and it can be also used to duplicate values. 
All these uses of the COPY gate are illustrated in Figure~\ref{fig: COPY gate}.
Like the NAND and the NOT gate, the COPY gate cannot be reflected, and it can be used only from right to left.

\begin{figure}
  \centering
  \begin{subfigure}[b]{0.34\textwidth}
    \centering
    \includegraphics[height=5cm]{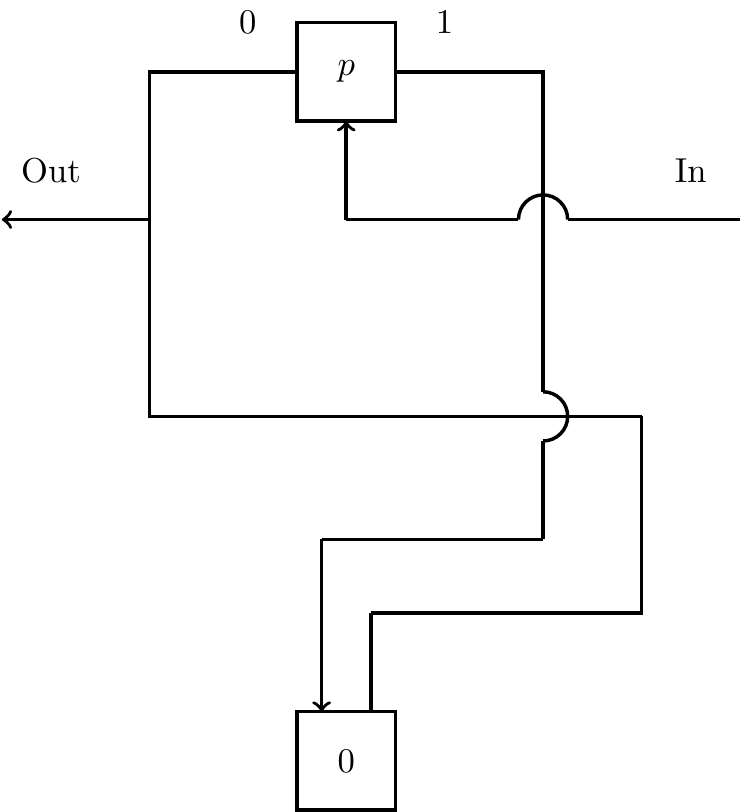}
    \caption{COPY gate.}
  \end{subfigure}
  \begin{subfigure}[b]{0.64\textwidth}
    \centering
    \includegraphics[height=5cm]{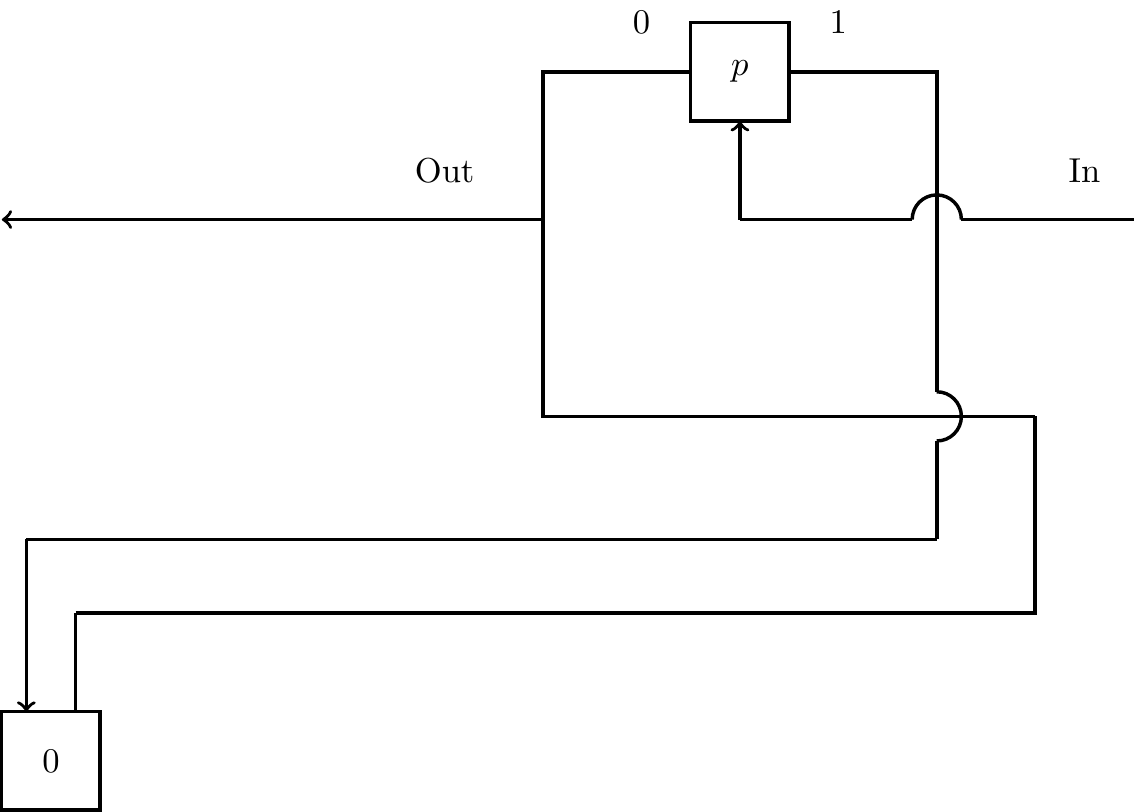}
    \caption{COPY gate moving a bit of information arbitrarily far.}\label{fig: copy-move}
  \end{subfigure}

  \vspace{15pt}

  \begin{subfigure}[b]{0.8\textwidth}
    \centering
    \includegraphics[height=5cm]{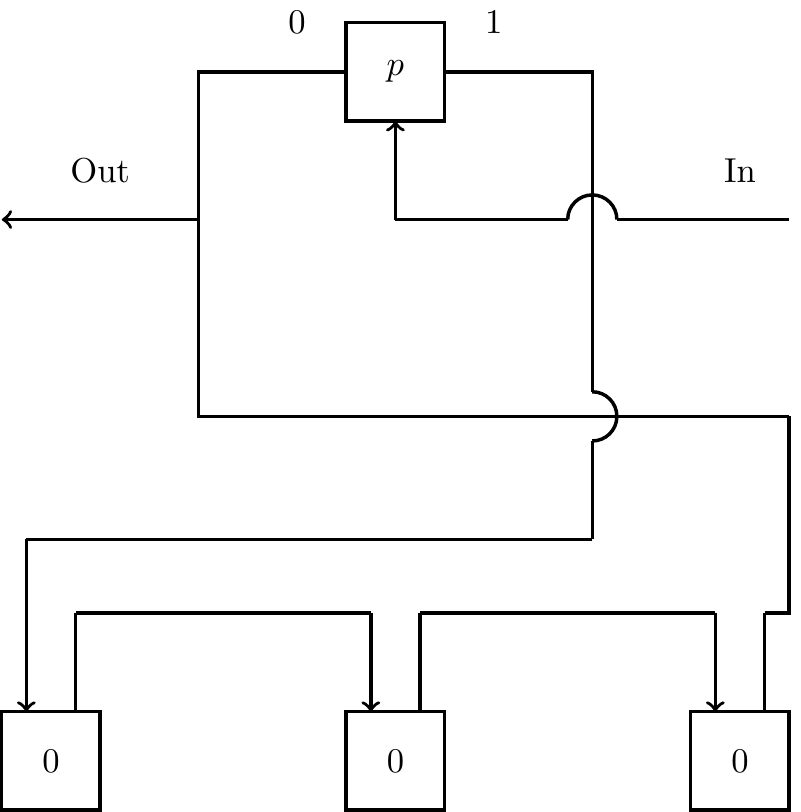}
    \caption{COPY gate triplicating a bit of information.}\label{fig: copy-duply}
  \end{subfigure}  
  
  \caption{Uses of the COPY gate.}\label{fig: COPY gate} 
\end{figure}

Our last device allows information to cross, and hence is named the CROSS gate.
A CROSS gate can be obtained from a XOR gate ($\veebar$), itself obtained by composition of NAND and NOT gates, as illustrated in Figure~\ref{fig: esq circ}.
  
\begin{figure}
  \centering
  \begin{subfigure}[t]{0.45\textwidth}
    \centering
    \includegraphics{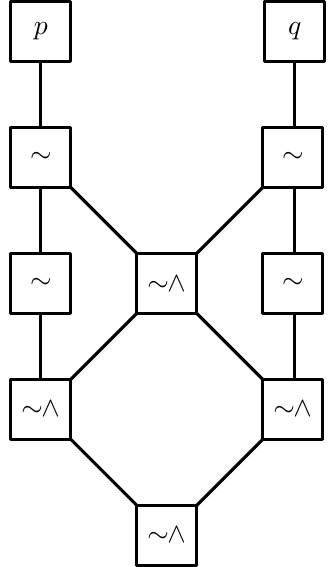}
    \caption{XOR gate built from NOT and NAND gates.}\label{fig: esq circ XOR} 
  \end{subfigure}
  \begin{subfigure}[t]{0.45\textwidth}
    \centering
    \includegraphics{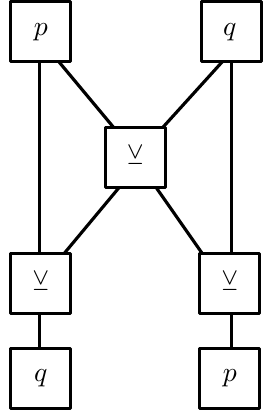}
    \caption{CROSS gate built from XOR gates.}\label{fig: esq circ CROSS} 
  \end{subfigure}    
  \caption{Construction of XOR and CROSS gates.}\label{fig: esq circ}
\end{figure}
  
From these constructions we obtain that the boxes and wires system can simulate any logical circuit.
Moreover, with an infinite diagram we can compute infinite logical circuits, layer by layer, and thus we can simulate the action of a cellular automaton over a finite initial configuration.
This, in turn, allows the simulation of an arbitrary Turing machine.

\begin{teor}
For any cellular automaton of radius $\frac{1}{2}$ with a quiescent state, there exists an infinite periodic wires and boxes diagram such that for any finite initial configuration it can be modified in a finite part in such a way that the diagram simulates the cellular automaton on the given configuration.
\end{teor}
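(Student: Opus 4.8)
The plan is to realize the space--time diagram of the cellular automaton as a doubly periodic tiling of the plane by copies of a single \emph{cell gadget}, and to encode a given finite initial configuration by perturbing only finitely many input boxes in the initial row. First I would fix a binary encoding of the state set $S$, using $k=\lceil \log_2 |S|\rceil$ bits and encoding the quiescent state as $0\cdots 0$. Since the automaton has radius $\frac{1}{2}$, its local rule is a function of the two cells of a neighborhood, and under this encoding it becomes a Boolean function $f:\{0,1\}^{2k}\to\{0,1\}^{k}$. By the circuit-simulation result established above, there is a finite wires-and-boxes gadget $G$ computing $f$; this $G$ is the basic tile.

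Next I would place one copy of $G$ at each space--time site $(i,t)$ with $t\ge 0$, arranged so that the gadget at $(i,t)$ receives the encodings of cells $i$ and $i+1$ of row $t$ and produces the encoding of cell $i$ of row $t+1$. Because each cell value at time $t$ is consumed by two gadgets of row $t+1$ (those of sites $i$ and $i-1$), I would use COPY gates to duplicate and transport each output, and CROSS gates wherever the resulting wires must cross in the planar layout; both are available from the constructions of Figures~\ref{fig: COPY gate} and~\ref{fig: esq circ}. The routing pattern is identical at every site, so the unperturbed diagram is periodic in space and in time, and no information about any particular configuration is built into its bulk.

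The simulation of a prescribed finite initial configuration is then obtained by a finite modification: the states of the cells at time $0$ are stored in the input boxes of the bottom row, and only the finitely many non-quiescent cells need to be set, the rest keeping the default encoding $0\cdots 0$. Since $f$ preserves the quiescent state (a quiescent neighborhood yields quiescent output, matching the unperturbed tile), the default diagram already computes the evolution of the all-quiescent configuration, and perturbing the finite active window suffices. The active region of row $t$ is contained in a finite interval that widens by one cell per step, in the direction dictated by the light cone of a radius-$\frac{1}{2}$ rule, so that layer by layer the diagram reproduces the automaton exactly.

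The main obstacle I expect is the orchestration of the triggering signal across this infinite, one-sidedly growing computation: every wire may be used only once, every box must be entered in a state-consistent order, and in every CROSS the distinguished wire $1$ must be used before wire $2$. I would therefore design a single sweep that computes the sites of each row in an order compatible with the left-to-right orientation of the NAND and NOT gadgets and the right-to-left orientation of the COPY gadget, and that enters each row only after the previous one is complete. Checking that this sweep meets all the ordering constraints simultaneously---at the junctions between adjacent gadgets, at the interfaces between successive rows, and across the quiescent regions that the signal traverses trivially---is the delicate part, but the periodicity of the tiling reduces it to verifying finitely many local interfaces.
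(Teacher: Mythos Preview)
Your high-level plan---binary encoding, a periodic tiling by a gadget computing the local rule, COPY/CROSS for routing, and a finite perturbation for the input---matches the paper's, but the step you flag as ``the delicate part'' is in fact the essential missing idea, and your proposed resolution does not work. In a genuinely periodic diagram every row is infinite and every tile is identical; a single signal sweeping a row therefore has no local way of detecting where the finite active window ends. It cannot ``enter each row only after the previous one is complete,'' because the previous row is never complete: the signal would march rightward forever through tiles that look exactly like the active ones (quiescent inputs change only the box states, not the wiring or the exit direction of the sweep). Your claim that periodicity reduces the check to finitely many local interfaces conflates local correctness of a gadget with global termination of the sweep; the latter is a non-local property that no amount of interface checking will give you.

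The paper supplies precisely this missing mechanism. The encoded row carries, besides the duplicated binary cell values, two explicit marker bits $s_0,s_1$ between consecutive cell blocks; initially $s_1=1$ at the left boundary and $s_0=1$ at the right boundary of the finite configuration, and $0$ elsewhere. The per-block circuit $C$ is designed so that these markers shift outward by one block per layer, tracking the growing light cone, and so that it computes $f(0,x^R)$ rather than $f(x^L,x^R)$ when $s_1=1$. The periodic background is then altered at two fixed places inside each copy of $C$: when the rightward-computing signal produces an $s_0=1$ output, the wire is rerouted downward into the COPY layer instead of continuing to the next $C$; and when the leftward COPY layer copies an $s_1=1$, it is rerouted downward into the next row of $C$'s. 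Thus the turnaround is triggered by data carried in the (finitely perturbed) configuration, not by any non-periodic feature of the diagram. Without this marker-and-reroute idea your construction cannot produce a signal path that reaches every required tile in finite time while keeping the background periodic.
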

\begin{proof}

Without loss of generality we can assume that the cellular automaton has $2^N$ states, has local function $f$, and that the state $0$ with binary representation $(0,\ldots,0)$ is quiescent. In the simulation, a finite configuration ${}^\omega 0 a_0 a_1 \ldots a_n 0^\omega$ will be represented by a row of the form 
\[
{}^\omega 0 
\mathbf{01} \left \lfloor{a_0}\right \rfloor \left \lfloor{a_0}\right \rfloor 
\mathbf{00}  \left \lfloor{a_1}\right \rfloor \left \lfloor{a_1}\right \rfloor 
\mathbf{00} \ldots  
\mathbf{00}  \left \lfloor{a_n}\right \rfloor \left \lfloor{a_n}\right \rfloor 
\mathbf{10}
0^\omega
\]
where $\left \lfloor{a_i}\right \rfloor$ is the binary representation of $a_i$.
Thus, each ``cell'' of the automaton is duplicated and translated into binary.
The pairs of digits in bold will serve a special function: a 1 on the left will travel to the right, indicating the end of the (growing) finite configuration, and a 1 on the right will travel to the left, indicating its beginning. 
Moreover, each row will be duplicated through the use of a layer of {\em copy} gates, where the computation proceeds from right to left.
To simulate the iteration of the cellular automaton we will use a layer of copies of a circuit $C$, computed from left to right, whose details we give below.
The general structure of the simulating configuration is shown in Figure~\ref{fig: ACSim}, where $a'_i=f(a_{i-1},a_i)$ and $a''_i=f(a'_{i-1},a'_i)$ (except for the first and last, where an argument is replaced by the quiescent 0).

\begin{figure}
  \centering
  \includegraphics[width=\textwidth]{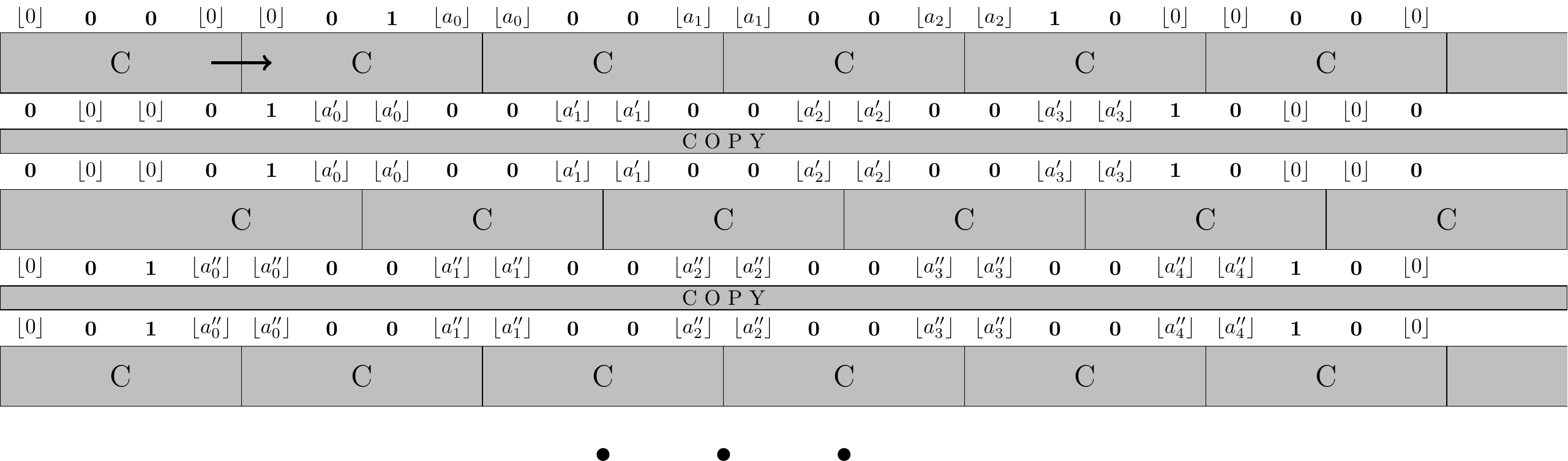}
  \caption{Simulation of a radius 1/2 cellular automaton with quiescent state, through the use of boxes and cables.
    Layers of copy gates alternate with layers computing the local function.
    The Turmite starts on the bold arrow.}\label{fig: ACSim}
\end{figure}

The purpose of circuit $C$ is to compute the boolean vector function
\[
 \varphi(x^L_0,x^L_1,\ldots,x^L_{N-1},s_0,s_1,x^R_0,x^R_1,\ldots,x^R_{N-1}) \;=\;
 \begin{cases} 
 (s_1,f(x^L,x^R),f(x^L,x^R),s_0), & \mbox{if } s_1=0\\ 
 (s_1,f(0,x^R),f(0,x^R),0), & \mbox{if } s_1=1
 \end{cases}
\]
where the local function $f$ of the automaton has been adapted to the binary representation.
As shown by the constructions in the previous section, such a circuit can always be constructed.

The construction so far would simulate the iteration of the cellular automaton, if we could compute each infinite row after the other.
Since we want the computation to limit itself to the (increasing) non-quiescent part, we must stop calculating the $C$ circuits at some point and turn to the copying row, and then stop that too and restart the computation of $C$ at a lower level.
In order to keep the background configuration periodic, this must be achieved by a modification of the circuits which allows the flow of computation to obey the signals $s_0$ and $s_1$.

First, we modify $C$ by asking that the first and last rows of circuits inside it perform a NOT on its inputs (resp., its outputs).
Thus, the inner workings of $C$ will actually compute $\sim{\varphi(\sim{x^L},\sim{s_0},\sim{s_1},\sim{x^R})}$.
This modification allows us to intervene $C$ without having to know the precise structure of the computation of $\varphi$ itself.
The intervention itself is depicted in Figure~\ref{fig: ACDetalle}, and involves the bottom right corner of $C$, the middle-top part of the $C$ below it, and the copying gates in between.
We know that the bottom right corner of $C$ is a NOT that will write into its output only if $s_0=1$ and $s_1=0$.
Hence, we reroute the cable after this 1 has been written, and instead of continuing to the next $C$, we move downwards and join the copying gates, finishing the computation of the current row.
On the other hand, we modify the gate that copies the value of $s_1$ into the $C$ circuit below: if the value is 1, we interrupt the copying row and move into $C$, joining the row of NOTs with which it begins.
The function $\varphi$ ignores the left part of its input when $s_1=1$, in this way the circuit $C$ is correctly computed in this case, even if the Turmite did not enter through the input wire of $C$.

\begin{figure}
  \centering
  \includegraphics[width=0.5\textwidth]{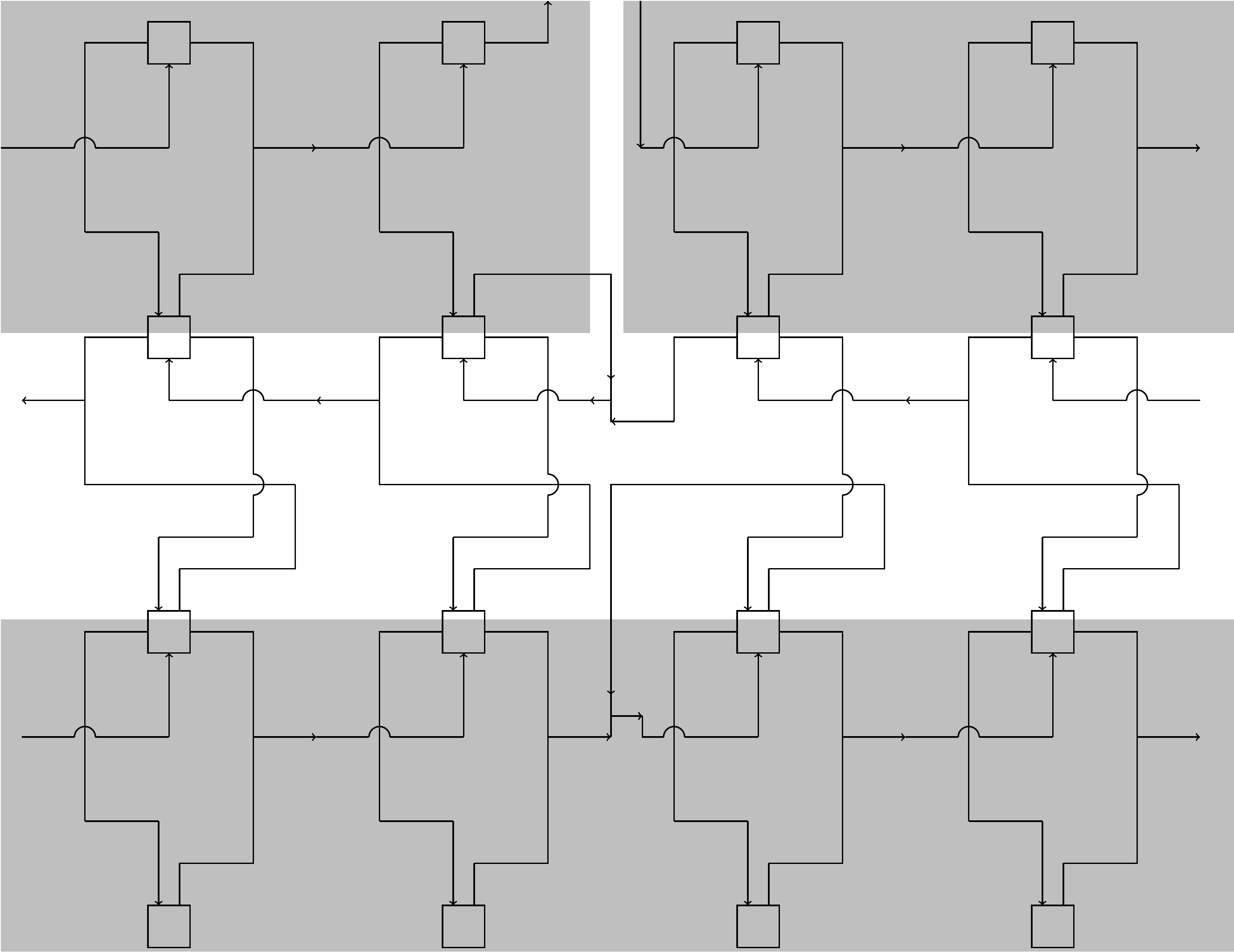}
  \caption{Changes to the NOT and COPY gates around the bottom corners of the $C$ circuit.
    Three different instances of $C$ are highlighted in gray.
    The NOT of the bottom right corner of the first is modified, as well as the COPY gate below the bottom right corner of the second.}\label{fig: ACDetalle} 
\end{figure}  

\end{proof}

In particular, if we manage to simulate an arbitrary boxes and cables circuit inside a system (in this case, the trajectory of a Turmite), then we are able to simulate universal computation.

%
%
%
 
\section{Turing-universality of Turmites}\label{sec:Turmit}

In this section we show that any nontrivial Turmite can build through its trajectory all the elements of the cables and boxes system.
The proof is based on the fact that any nontrivial Turmite rule $w$ has at least two colors $r,l$ such that $w(r)=R$, $w(r\oplus_n 1)=L$, $w(l)=L$ and $w(l\oplus_n 1)=R$.
In all the following diagrams we will represent $r$ as white, and $l$ as black.
Cells that are not represented can have an arbitrary color.

\paragraph{Cables.} Cables going from left to right correspond to a line of $l$ cells under a line of $r$ cells, and can be followed by a Turmite initially oriented to the east as shown in Figure~\ref{fig: way ex}. Since the Turmite rule is invariant by rotation, right-left, up-down and down-up cables can be constructed in the same way. These cables are connected through a ``L" shape with the appropriate corner color, as shown in Figures~\ref{fig: L ex} and \ref{fig: L ex2}.
  
\begin{figure}[H]
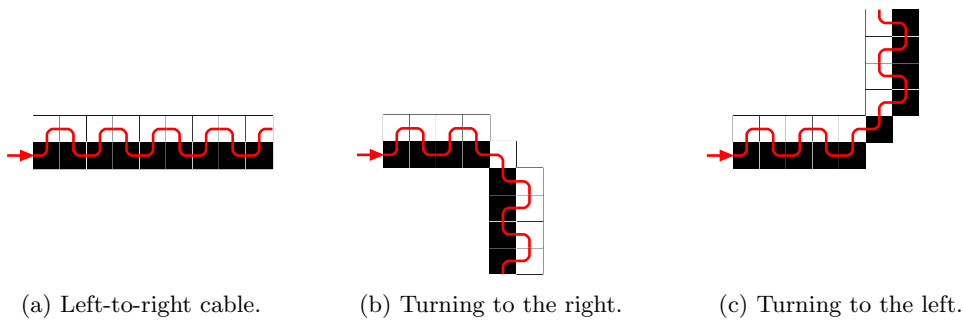

  \centering 
  \begin{subfigure}[t]{0.3\textwidth}
    \centering
    \includegraphics{img/cable_a}
    \caption{Left-to-right cable.}\label{fig: way ex} 
  \end{subfigure}
  \begin{subfigure}[t]{0.3\textwidth}\centering 
    \includegraphics{img/cable_b}
    \caption{Turning to the right.}\label{fig: L ex} 
  \end{subfigure}
  \begin{subfigure}[t]{0.3\textwidth}\centering 
    \includegraphics{img/cable_c}
    \caption{Turning to the left.}\label{fig: L ex2} 
  \end{subfigure}  
  \caption{Building single-use cables in nontrivial Turmites. The red line represents the Turmite trajectory.}
\end{figure}

\paragraph{Boxes.} Boxes are built according to Figure~\ref{fig: esq caja}. The state of the box is defined by the common color of the two cells in the yellow rectangle (white for 1, black for 0).

A signal entering by 1 follows the red line and exits by 2 after modifying the state of the box, as expected (corresponding to Figure~\ref{fig:caja}).
A signal entering through 3 follows the green line and exits through 4 or 5 depending on the state of the box. 
  
\begin{figure}[H]
  \centering
  \includegraphics[width=4cm]{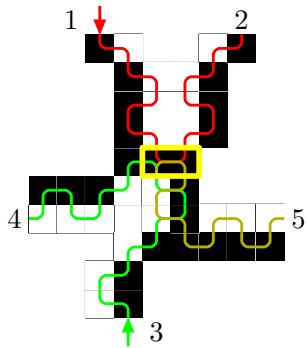}
  \caption{The Box. The internal state is recorded in the yellow rectangle, which is initially black (0). It can be modified by the upper input cable (red). The result is ``read'' from bellow, by the lower cable (green).} 
\label{fig: esq caja}
\end{figure}  

\paragraph{Cross and union.} Finally, the Cross and Union are built as indicated in Figures~\ref{fig:union}, \ref{fig:cruceA} and~\ref{fig:cruceB}. 
All of these Figures can be rotated but not reflected, since reflection changes the turning senses. 
This is not a limitation in the case of Cables and Unions, nor in the case of the Box, which is only needed in the way that it was designed, but we do need the Cross in a reflected form, which is why we include its two versions.
From these constructions we finally obtain our main result.

\begin{figure}[H]
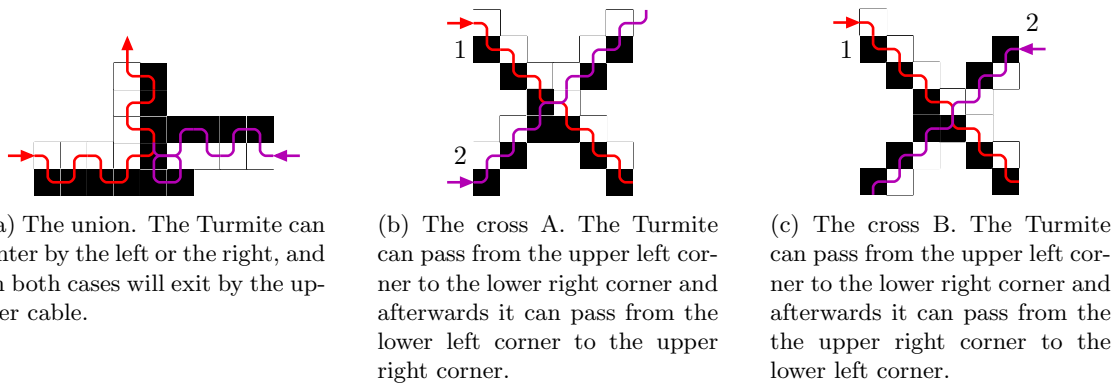

\centering
  \begin{subfigure}[t]{0.30\textwidth}
  \centering
  \includegraphics{img/union.mps}
  \caption{The union. The Turmite can enter by the left or the right, and in both cases will exit by the upper cable.}\label{fig:union} 
  \end{subfigure}  
  \hspace{.03\textwidth}
  \begin{subfigure}[t]{0.30\textwidth}
  \centering
  \includegraphics{img/cruceA.mps}
  \caption{The cross A. The Turmite can pass from the upper left corner to the lower right corner and afterwards it can pass from the lower left corner to the upper right corner.}\label{fig:cruceA}
  \end{subfigure} 
  \hspace{.03\textwidth} 
  \begin{subfigure}[t]{0.30\textwidth}
  \centering
  \includegraphics{img/cruceB.mps}
  \caption{The cross B. The Turmite can pass from the upper left corner to the lower right corner and afterwards it can pass from the the upper right corner to the lower left corner.}\label{fig:cruceB}
  \end{subfigure}  
\caption{Building cable union and crossover with nontrivial Turmites.}
\end{figure}    
 
  \begin{teor}\label{teo: uni RRL}
Any nontrivial Turmite can simulate the Cables and Boxes system, and therefore any 1D Turing machine and cellular automaton over finite initial configurations.
  \end{teor}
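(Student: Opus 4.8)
The plan is to reduce the theorem to two ingredients: a combinatorial fact guaranteeing that every nontrivial rule supplies the two colors required by the gadgets, and a direct trajectory verification that each gadget of the Cables and Boxes system is realized by its corresponding figure. The conclusion then follows by composing this with the earlier theorem on simulating radius-$\tfrac12$ cellular automata with a quiescent state.

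First I would establish the fact stated at the beginning of this section: for any nonconstant $w\in\{L,R\}^n$ there exist colors $r,l$ with $w(r)=R$, $w(r\oplus_n 1)=L$, $w(l)=L$ and $w(l\oplus_n 1)=R$. This is immediate from a parity argument: reading $w$ cyclically, the number of positions at which the symbol changes is even and, since $w$ is nonconstant, strictly positive; hence there is at least one $R\to L$ transition and at least one $L\to R$ transition, giving $r$ and $l$ respectively. The point of these colors is that a cell colored $r$ turns the Turmite right on its first visit and left on its second (since it then reads $r\oplus_n 1$), and symmetrically for $l$. This asymmetry between the first and second visit is exactly what implements the single-use discipline of the wires and the memory of the boxes.

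Next I would verify the gadgets one at a time, taking $r$ as white and $l$ as black and leaving every unrepresented cell free. For the cable, a direct computation shows that a Turmite entering eastward between a row of $r$ above and a row of $l$ below zigzags forward with spatial period two, visiting each cell exactly once and recoloring $r\mapsto r\oplus_n 1$ and $l\mapsto l\oplus_n 1$; rotational invariance of the rule then yields cables in all four orientations together with the two corner pieces. The box, cross and union are checked the same way by a finite case analysis over the entering port and, for the box, over its stored state: one confirms that the red and green trajectories of the figures are followed for \emph{every} nontrivial rule, that the exit ports match the specification, and that the single-use and ordering constraints are respected. Because only the four colors $r,\,r\oplus_n 1,\,l,\,l\oplus_n 1$ ever appear on visited cells, each verification is uniform in $n$ and in $w$.

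The main obstacle is compositional correctness rather than any individual gadget: I must ensure that when the gadgets are wired together into the logic gates, and ultimately into the infinite periodic diagram produced by the previous theorem, the Turmite always re-enters each successive gadget in the orientation and with the cell colors that the local verification assumed. In particular the arbitrary colors of unrepresented cells must never be read out of order, and the timing constraints (wire $1$ before wire $2$ in a cross, the input port of a box used only while in state $0$) must be met as the trajectory threads through the whole construction. Once this consistency is argued, the Turmite simulates the full Cables and Boxes system; feeding it the periodic diagram that simulates a radius-$\tfrac12$ cellular automaton with quiescent state, and recalling that such automata simulate arbitrary Turing machines on finite inputs, establishes the theorem.
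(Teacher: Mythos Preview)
Your proposal is correct and follows essentially the same approach as the paper: the paper's proof \emph{is} the sequence of gadget constructions in Section~\ref{sec:Turmit} (cables, turns, box, union, two crosses) built from the colors $r$ and $l$, together with the invocation of the earlier cellular-automaton theorem. Your parity argument for the existence of $r$ and $l$ and your explicit discussion of compositional consistency are welcome additions that the paper leaves implicit.
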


\section{\textbf{P}-completeness}\label{sec:Phardness}

The complexity class \textbf{P} is the class of decision problems that can be solved in polynomial time. It contains the class \textbf{NC} of decision problems that can be solved by a circuit of polynomial size and polylogarithmic depth; intuitively, \textbf{NC} corresponds to problems that can be efficiently parallelized. It is believed that $\textbf{NC}\neq\textbf{P}$.
A problem $A$ is \textbf{P}-complete if every problem in $P$ reduces to $A$ through an algorithm that uses logarithmic space. 
If $\textbf{NC}\neq\textbf{P}$, then a \textbf{P}-complete problem cannot be in \textbf{NC}.

Predicting the behavior of a computable system - given an initial state and a time $t$, predict the state or some property of the system after $t$ steps - can be done by simple simulation in time $O(t)$. In particular, the problem is in \textbf{P} if we see $t$ as the size of the input - i.e. $t$ is written in unary and the initial state is described using $O(t)$ bits. If the prediction of a system is \textbf{P}-complete, the system is seen as unpredictable in the sense that this naive algorithm is optimal; in other words, there is no ``shortcut'' to simulate the system, even using massively parallel computation.

In this section we consider a particular prediction problem. The VISIT($w$) problem for a fixed Turmite rule $w$ is defined as follows:

\begin{description}
\item[VISIT($w$).] Given a finite initial configuration $x$, a cell $(X,Y)\in\Z^2$ and a time $t\in\NN$, decide whether starting from $x$ the Turmite with rule $w$ passes by the cell $(X,Y)$ before time $t$.
\end{description}

We show the \textbf{P}-completeness of this problem. The proof proceeds by reduction from the Topologically Ordered Circuit Value problem (TCV), which is \textbf{P}-complete (see for example~\cite{Greenlaw:1995}). This proof is inspired by hints given in~\cite{Greenlaw:1995} and by the proof of \textbf{P}-completeness of Planar Circuit Value problem (PCV)~\cite{Goldschlager1977} through reduction to the Circuit Value problem (CV).

A logic circuit is a labeled directed acyclic graph $(V,E,G)$, where the vertices are labeled by $\{\textbf{AND}, \textbf{OR}, \textbf{NOT}, \textbf{I}\}$, and only one node has out degree 0: the \emph{output} node.
Moreover, vertices with labels in $\{\textbf{AND}, \textbf{OR}\}$ have in-degree 2 and vertices with label \textbf{NOT} have in-degree 1.
The input nodes are those with label \textbf{I}, and have in-degree 0.
In such a graph, when boolean values are assigned to the input nodes, the value of other nodes can be computed, in particular the value of the output node is the value of the circuit.

A logic circuit $(V,E,G)$ is said to be \emph{topologically ordered} if vertices are represented by numbers in such a way that 
for every $(u,v)\in E$ it holds that $u<v$.

\begin{description}
\item[TCV.] Given a topologically ordered logic circuit $L=(V,E,G)$ and a binary input $b$; decide whether the value of the circuit $L$ on $b$ is $0$ or $1$.
\end{description}

We put special care into proving that the reduction is in fact log-space. 
First let us define formally the input encoding of each problem.
For VISIT($w$), we use the following encoding.
\[(x_1,y_1,c_1)(x_2,y_2,c_2)...(x_m,y_m,c_m)\#(x_0,y_0)\#d\#(X,Y)\#t\]
Where every cell of the initial configuration is $0$ except for the cells $(x_i,y_i)\in\Z^2$ which have color $c_i$, the initial Turmite position is $(x_0,y_0)$, and its initial direction is $d$, the potentially visited cell is $(X,Y)$ and $t$ is the time bound, all coded in binary.
As per the usual convention in $\textbf{P}$-completeness proofs in similar systems, we consider time complexity with respect to $t$.

Given this input, VISIT($w$) is clearly decidable in a time polynomial in $t$: build a $2t\times 2t$ central square of the initial configuration,  simulate the first $t$ steps of the trajectory of the Turmite and check whether the cell $(X,Y)$ was visited. This requires $\mathcal{O}(t^2)$ operations.

For TCV, the topologically ordered logic circuit is encoded as follows.
\[(1,g_1,l_1,r_1)(2,g_2,l_2,r_2)...(n,g_n,l_n,r_n)\#(i_1,v_1)(i_2,v_2)...(i_k,v_k)\]
where $g_i$ is the gate type of vertex $i$ and $l_i,r_i$ are the left and right input vertices of gate $i$ if any.
The pair $(i_j,v_j)$ represents an input vertex $i_j$ with its input value $v_j$.
The vertex $n$ is assumed to be the \emph{output} gate.
Using this encoding, the input length of a circuit is $\mathcal{O}(n\log(n))$.

\begin{prop}
VISIT($w$) is P-complete for every non trivial Turmite rule $w$.
\end{prop}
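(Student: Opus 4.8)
The plan is to establish the two halves of \textbf{P}-completeness separately: membership in \textbf{P}, and \textbf{P}-hardness via a log-space reduction from TCV.

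Membership is essentially already argued above: given the input, one materializes the relevant $2t\times 2t$ window of the initial configuration, simulates $t$ steps, and checks at each step whether the current cell equals $(X,Y)$. This costs $\mathcal{O}(t^2)$ operations, so measuring complexity in $t$ (written in unary) places the problem in \textbf{P}. The substance of the proof is therefore the hardness direction.

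For hardness, I would reduce TCV to VISIT($w$). Given a topologically ordered circuit $L=(V,E,G)$ with input $b$, the idea is to realize $L$ as a finite boxes-and-wires diagram using the gate constructions of Section~\ref{sec: sistema abstracto}, then to instantiate that diagram as an actual Turmite configuration using the gadgets of Section~\ref{sec:Turmit}. Each gate label is replaced by its corresponding sub-circuit (a NOT, and an AND or OR realized from NAND and NOT gates), input nodes become input boxes whose state is fixed by the corresponding value $v_j$ (encoded in the colors of the cells recording the box state), and the wiring between gates is carried out with cables, COPY gates (to move, duplicate, and reverse the direction of signals), unions, and CROSS gates at wire intersections. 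The Turmite is started at the beginning of a single signal path that threads through the whole diagram, and the cell $(X,Y)$ is placed on the output wire so that the Turmite reaches it exactly when the output box is read as $1$. The time bound $t$ is set to a fixed polynomial in the total number of cells of the configuration, itself polynomial in $n$; this guarantees that the computation finishes before time $t$ precisely when the circuit outputs the detected value.

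The point that makes a single Turmite trajectory evaluate the entire circuit is the topological ordering. All of our devices carry one-use and precedence constraints (a wire is used once; in a CROSS, wire $1$ must be used before wire $2$; a box's input wire is used while it is still in state $0$; COPY runs right-to-left). Evaluating gates in increasing order of their indices, which the condition $u<v$ for every edge $(u,v)\in E$ makes possible, yields a global schedule of signal propagation in which every gate fires only after its inputs are available and every crossing is traversed in the required order. This is exactly why the topologically ordered variant TCV, rather than plain CV, is the convenient starting point. The hard part will be verifying that the reduction runs in logarithmic space. For this I would fix a rigid, tile-based layout: allocate one horizontal band per gate in topological order, route all inter-gate wires through a bounded number of vertical channels, and place a CROSS gadget at each channel crossing, in the spirit of Goldschlager's planar layout. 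In such a canonical layout the color of the cell at coordinate $(p,q)$ is a simple function of which tile and which position-within-tile $(p,q)$ falls into, combined with the fixed $\mathcal{O}(1)$-size internal picture of the relevant gadget from Section~\ref{sec:Turmit}. Since the only data one must manipulate are gate indices, channel indices, and offsets, all of magnitude $\mathcal{O}(\mathrm{poly}(n))$, every such computation uses counters of $\mathcal{O}(\log n)$ bits, so a transducer can emit the list of modified cells $(x_i,y_i,c_i)$, the start cell and direction, the detector cell $(X,Y)$, and the bound $t$ within logarithmic space. Combining log-space hardness with membership then gives the claim.
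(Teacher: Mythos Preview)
Your proposal is correct and follows essentially the same approach as the paper: membership by direct simulation, and hardness by a log-space reduction from TCV that lays the circuit out on a tile grid (the paper puts gate $i$ at position $(i,i)$ of an $n\times n$ grid, sends each output rightward on a horizontal wire and each input downward on a vertical wire, and at every $(k,i)$ with $k<i$ places the appropriate cross/connect device), then instantiates each tile with the Turmite gadgets of Section~\ref{sec:Turmit}. The paper makes the log-space claim concrete via an explicit transducer with named subroutines, each using $\mathcal{O}(\log n)$ workspace, but your counter-based sketch captures the same point.
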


\begin{proof}
Given a topologically ordered logic circuit with gates $\{g_1,...,g_n\}$, we first build an equivalent circuit whose elements are positioned in a $n\times n$ grid. For each $1\leq i\leq n$, a gate $g_i$ (built using the boxes and wires system seen in Section~\ref{sec: sistema abstracto}) is placed at coordinate $(i,i)$ in the lattice. Each gate has two vertical cables bringing its left and right input from above, and one horizontal cable carrying its output towards the right. If an input is not used, the cable is left unconnected.

At each coordinate $(k,i)$ with $k<i$, the cable carrying the output of $g_k$ and the two cables carrying the input of $g_i$ meet. We distinguish three cases:

\begin{enumerate}
\item If $l_i = k$ then the horizontal cable is connected to the first vertical cable and crosses the second vertical cable;
\item Symmetrically if $r_i = k$;
\item Otherwise, the horizontal cable crosses both vertical cables.
\end{enumerate}

Note that, since the original circuit is topologically ordered, all cable unions take place in the upper right half of the grid.
Therefore we can omit sending cables to the left of the gates.
Furthermore, we can ``activate'' each gate in numerical order, since each gate's input only depends on the output of gates with a lower number.
This consideration will prove useful considering the sequential nature of the computation in the Turmite's trajectory.

These crosses and connections are performed using the devices shown in Figure~\ref{fig:Tpe}, one device for each possible case. Notice that an horizontal cable that is connected to a vertical cable also continues towards the right.

Furthermore, since the Turmite travels from left to right, we have to perform a ``carriage return'' when the Turmite reaches the end of a line to bring it to the beginning of the next line, for which we use the COPY gates. The time needed to travel through the circuit is at most $t=n^2T$, where $T$ is the maximal time to travel through one of the devices. In order to guarantee that the Turmit will not visit the output cell after exiting the circuit, a wire of length $t$ is added at the end of the circuit.

\paragraph{Example.} Figure~\ref{fig:TCVXOR} shows an example of this construction for a small circuit.
The values of the input gate included while drawing the input gate.

  \begin{figure}[htpb]
    \centering
    \begin{subfigure}[b]{0.3\textwidth}
      \centering
      \includegraphics[height=5cm]{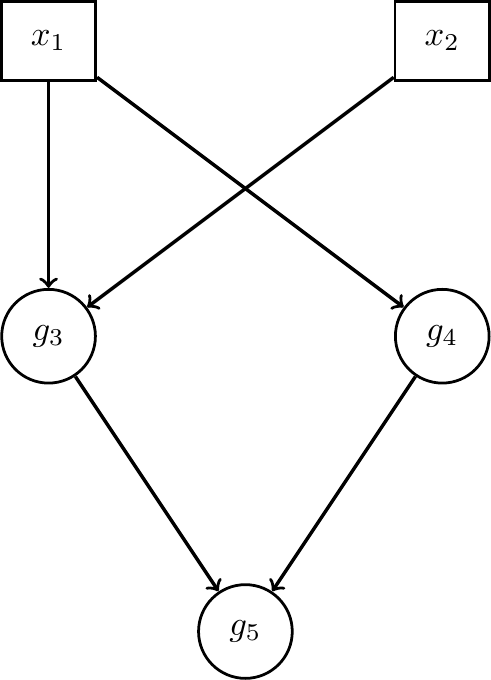}
      \caption{Original circuit.}              
    \end{subfigure}%
    \begin{subfigure}[b]{0.35\textwidth}
      \centering
      \includegraphics[height=5cm]{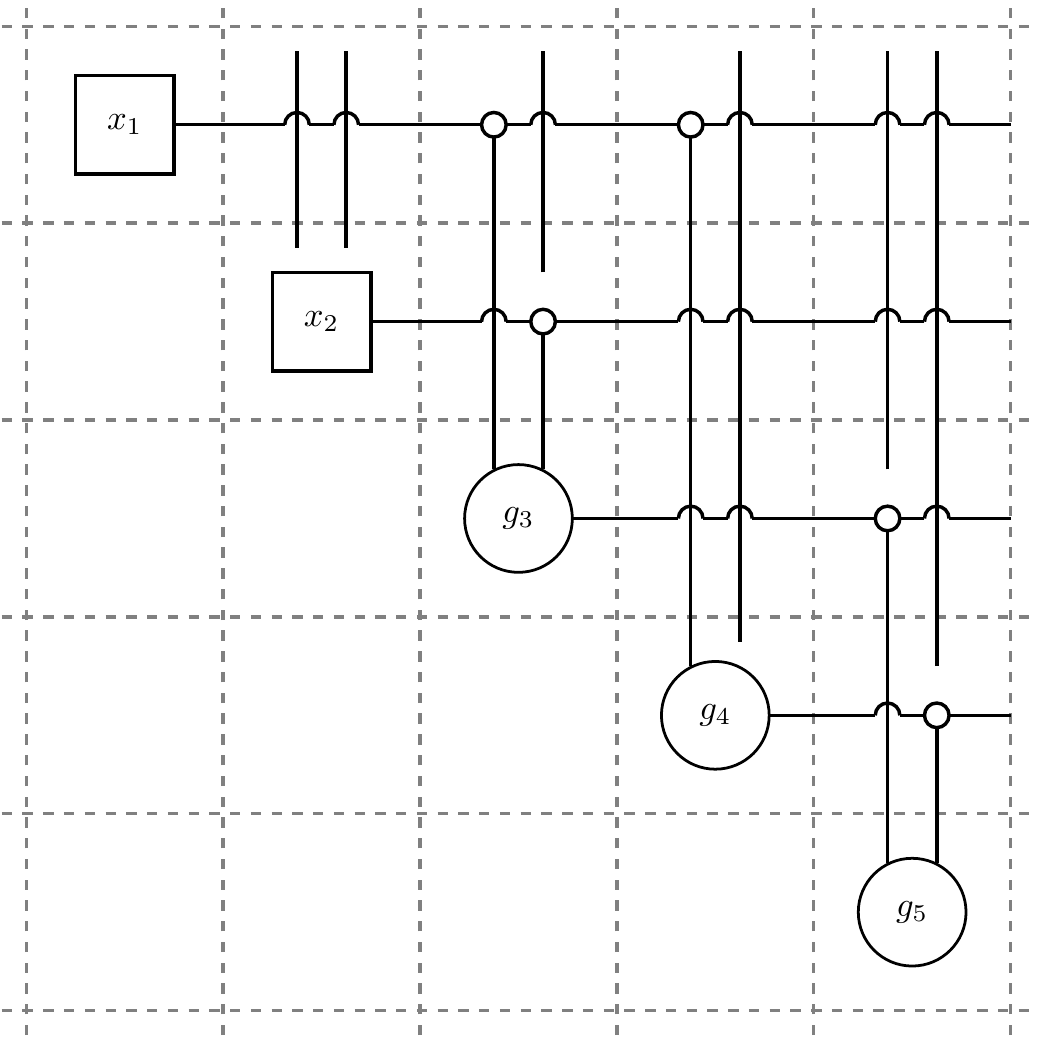}
      \caption{Resulting circuit.}                
    \end{subfigure}
    \begin{subfigure}[b]{0.34\textwidth}
      \centering
      \includegraphics[height=5cm]{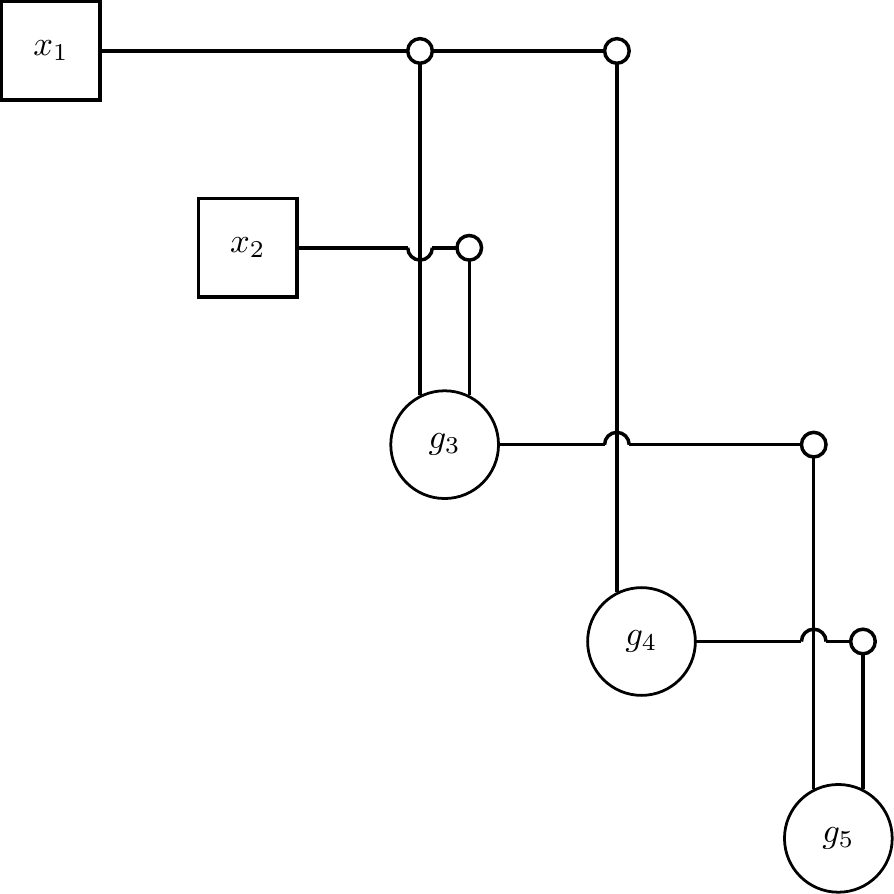}
      \caption{Removing unused cables.}                
    \end{subfigure}
    \caption{Example of a conversion of an given circuit to a circuit embedded in a grid.}\label{fig:TCVXOR} 
  \end{figure}

The output of the circuit is computed by the configuration shown in Figure~\ref{fig: ejem imp alg}, where the Turmit trajectory is represented by a red line.

\begin{figure}[htpb]
  \centering
  \begin{subfigure}[t]{0.315\textwidth}
    \centering
    \includegraphics[width=\textwidth]{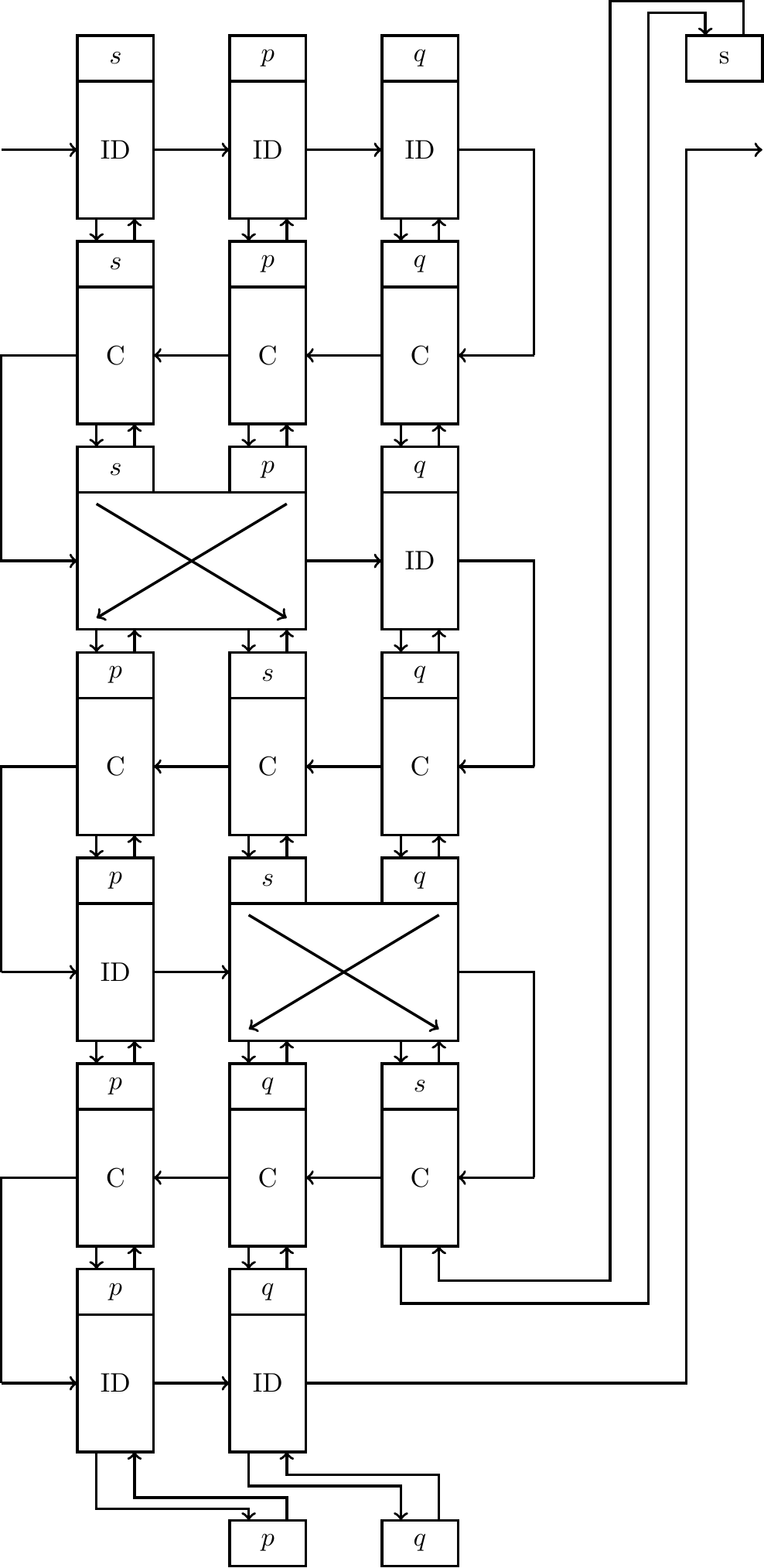}
    \caption{An horizontal rightwards cable crosses two vertical downwards cables.}\label{fig: red cross}                
  \end{subfigure}%
  \hspace{0.01\textwidth}
  \begin{subfigure}[t]{0.315\textwidth}
    \centering
    \includegraphics[width=\textwidth]{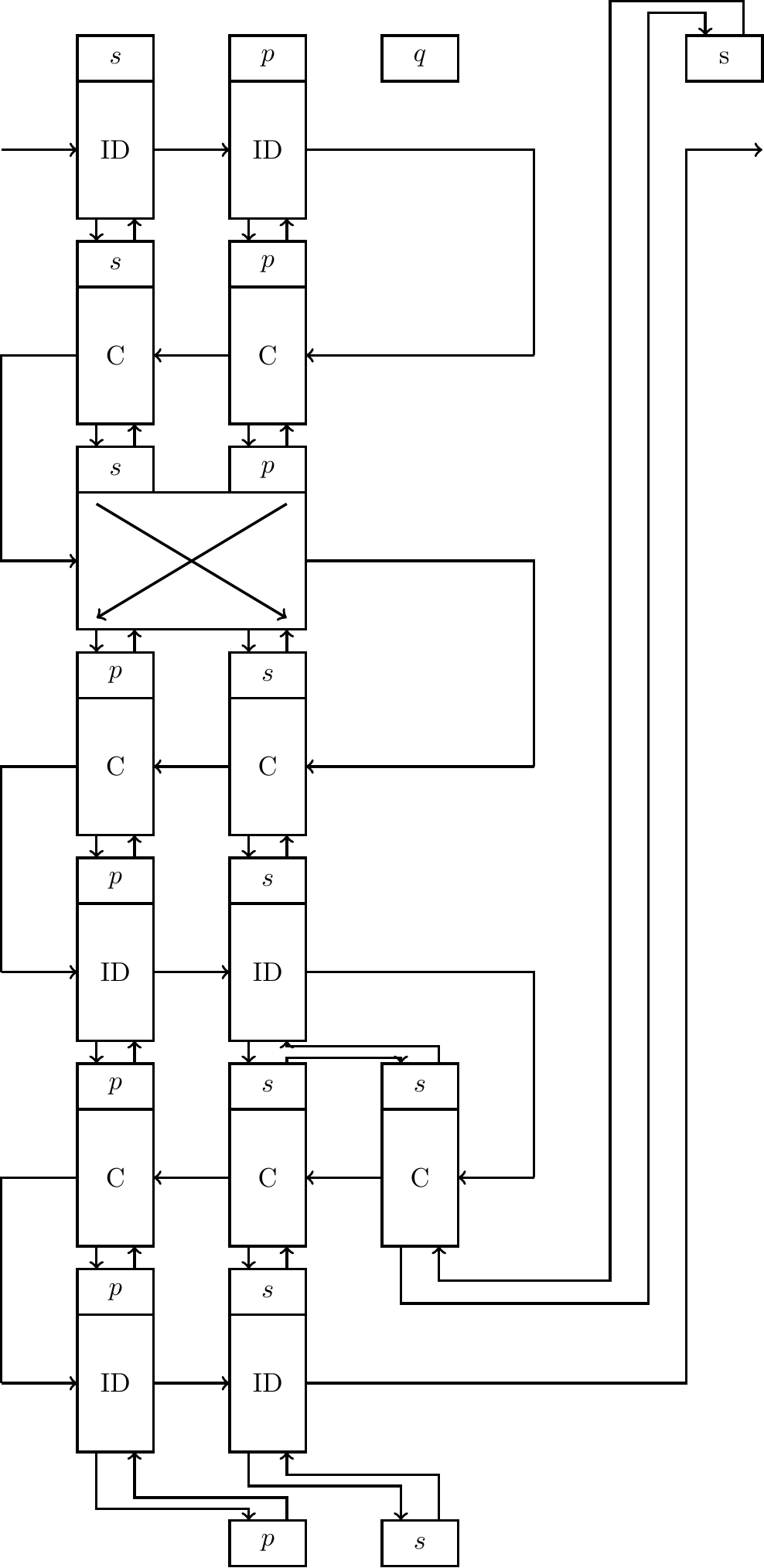}
    \caption{An horizontal rightwards cable crosses two vertical downwards cables and connects with the second one.}\label{fig: red join1}                
  \end{subfigure}%
  \hspace{0.01\textwidth}
  \begin{subfigure}[t]{0.315\textwidth}
    \centering
    \includegraphics[width=\textwidth]{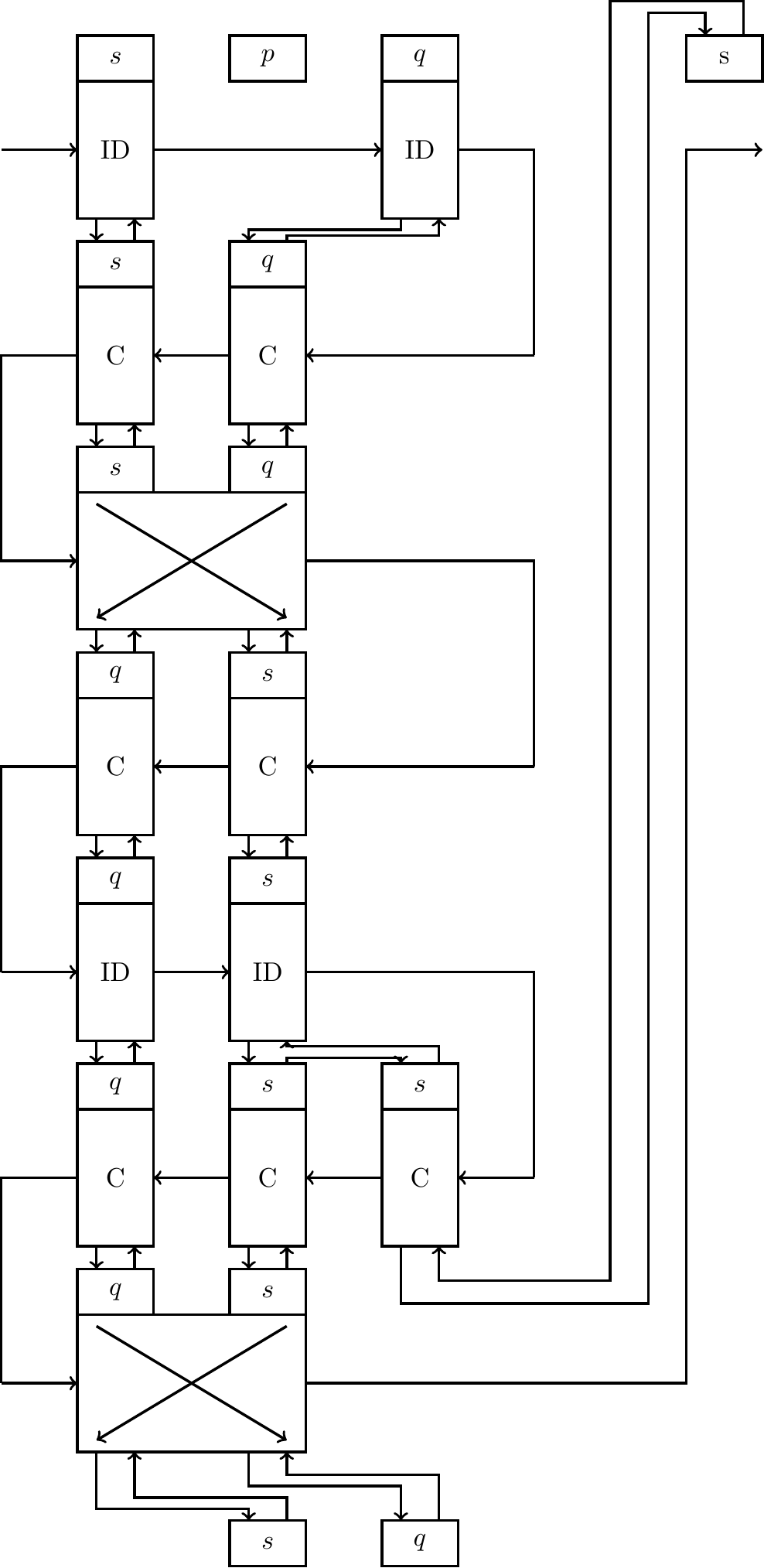}
    \caption{An horizontal rightwards cable crosses two vertical downwards cables and connects with the first one.}\label{fig: red join2}                
  \end{subfigure}
  \caption{Gates performing the union and crossing of cables used in the reduction of TCV to VISIT. The $ID$ gate is a shortcut for two NOT gates connected through a COPY gate. C is a shortcut for COPY. The crossed gate is the planar cross.} \label{fig:Tpe}
\end{figure}

  \begin{figure}[htbp]
    \centering
    \begin{subfigure}[t]{0.27\textwidth}
      \centering
      \includegraphics[height=2.5cm]{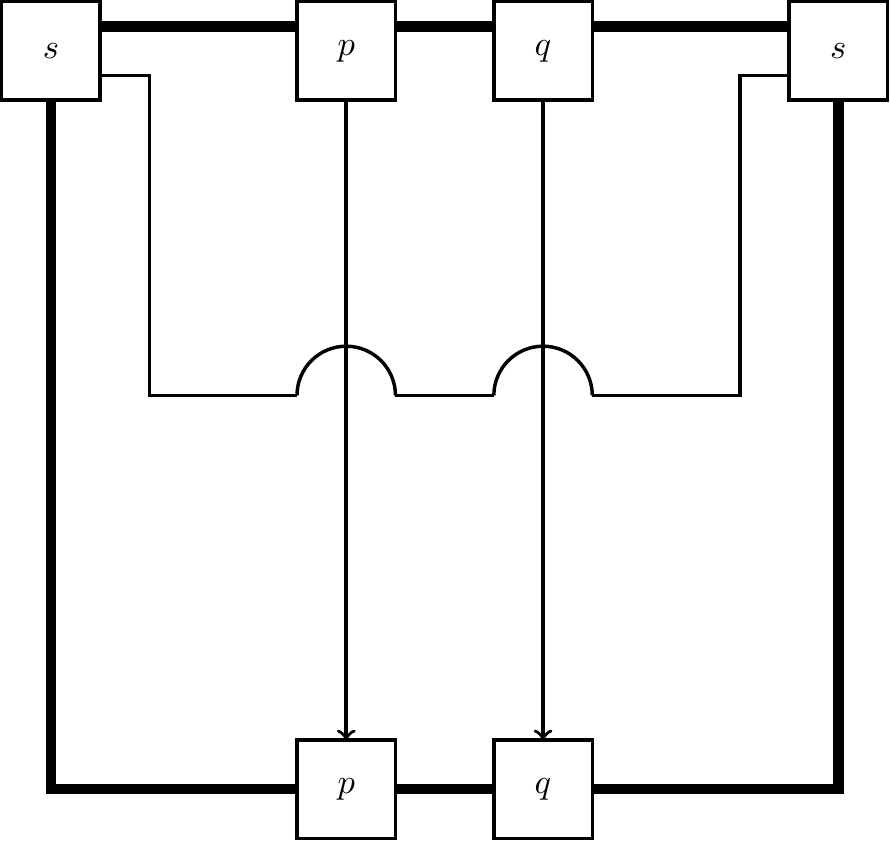}
      \caption{Figure~\ref{fig: red cross}), simplified.}
    \end{subfigure}
    \hspace{0.06\textwidth}
    \begin{subfigure}[t]{0.27\textwidth}
      \centering
      \includegraphics[height=2.5cm]{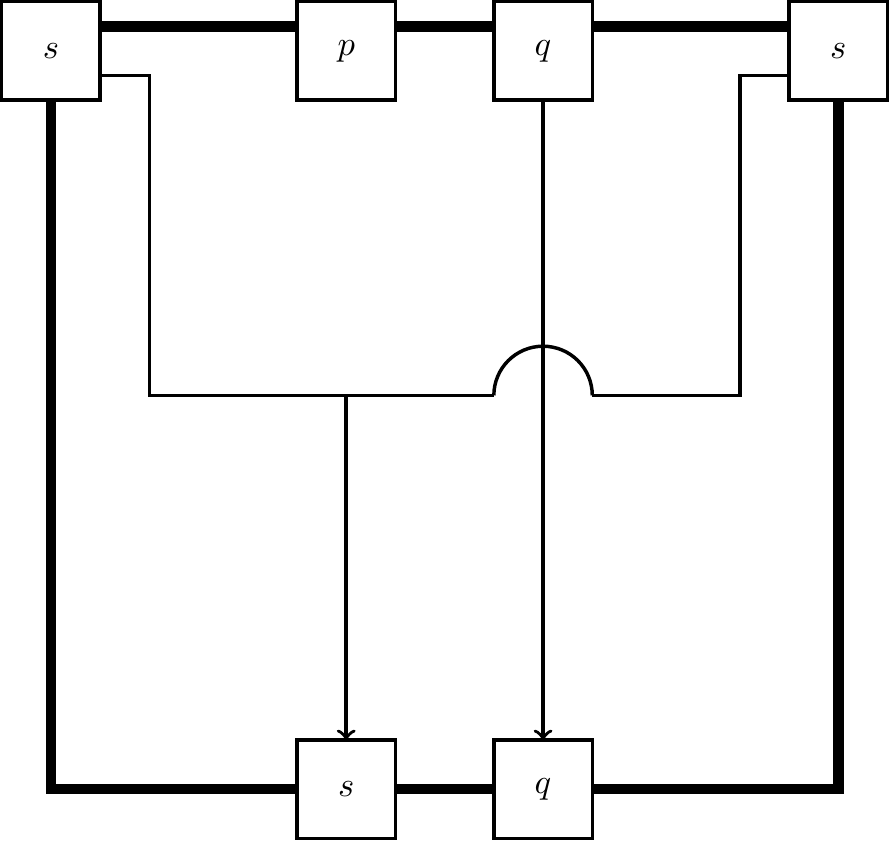}
      \caption{Figure~\ref{fig: red join1}), simplified.}  
    \end{subfigure}
    \hspace{0.06\textwidth} 
    \begin{subfigure}[t]{0.27\textwidth}
      \centering
      \includegraphics[height=2.5cm]{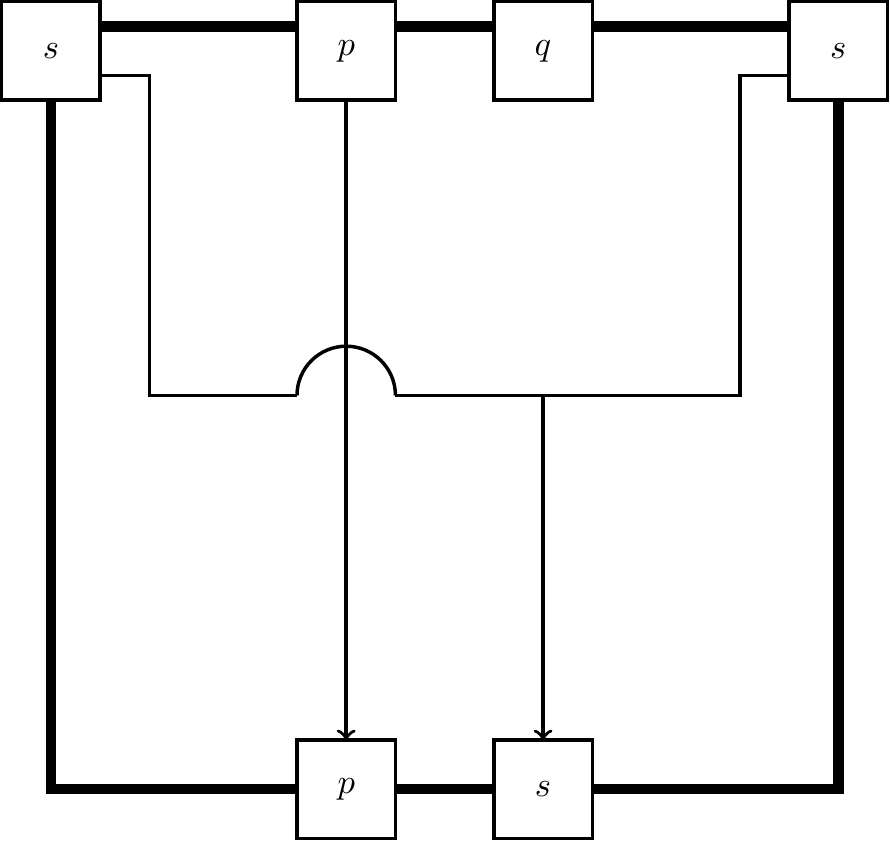}
      \caption{Figure~\ref{fig: red join2}), simplified.}  
    \end{subfigure}
  \end{figure}

 \begin{figure}[htbp]
  \centering
  \includegraphics[width=\textwidth]{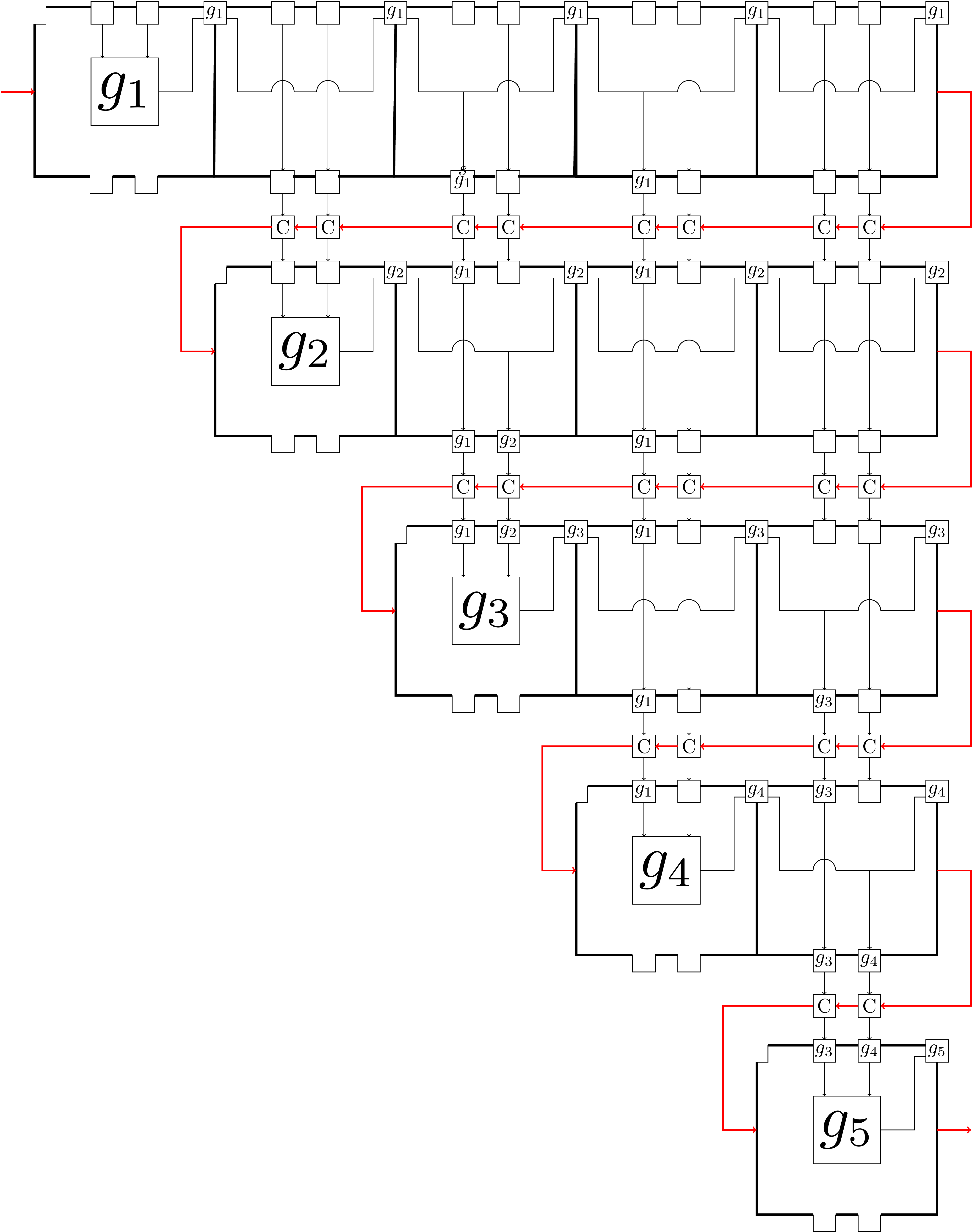}
  \caption{A Turmite configuration simulating the logic circuit. Gate $G_i$ corresponds to the $i$-th gate of the original circuit, and $g_i$ is the output of the gate $G_i$. C is the COPY gate. The red line corresponds to the Turmite trajectory.}\label{fig: ejem imp alg}
 \end{figure}  

To complete the proof of TCV $\leq_{NC}$ VISIT($w$), we show that the reduction can be performed in logarithmic space and polynomial time with the input encodings defined above. The detailed algorithm is as follows:
  
  \begin{algorithm}[H]
  \caption{Reduction of TCV to VISIT($w$).} \label{alg; reduc}
  \begin{algorithmic} 
  \REQUIRE $(1,g_1,l_1,r_1)(2,g_2,l_2,r_2)...(n,g_n,l_n,r_n)\#(i_1,v_1)(i_2,v_2)...(i_k,v_k)$	
    \STATE $row  \leftarrow 1$
    \STATE $col  \leftarrow 1$    
    \STATE $Out  \leftarrow \espacio$
    \FOR{$row=1,...,n$}
      \STATE $Out  \leftarrow Out + \text{GATE$(row)$}$
      \FOR{$col=row+1,...,n$}
      	\STATE $P  \leftarrow $ FIND-PRED$(row,col)$
        \STATE $Out  \leftarrow Out+ \text{CROSS$(P,row,col)$}$
    \ENDFOR
    \STATE $Out  \leftarrow Out+\text{COPY-ROW$(row)$}$	   
    \ENDFOR
    \STATE $Out \leftarrow Out+ $END-ROUTINE$(row)$
    \ENSURE $Out$ 
  \end{algorithmic}
  \end{algorithm}
  
To understand how this algorithm works, first divide ${\Z^2}$ in \emph{tiles} whose side length is the maximum side length between the devices of Figure~\ref{fig:Tpe}. Variables $r$ and $c$ correspond to the current row and column, respectively.

The algorithm goes through the $n$ rows of the tile grid. The column $r$ of row $r$ contains the $r$-th gate.
In other columns $c$ $(r<c)$ we check whether $(r,c)$ is an edge of the circuit through the subroutine FIND-PRED.
Depending on the result, we write the appropriate crosses or unions using the subroutine CROSS.

When a row is finished, a row of COPY gates is added underneath so that the Turmite performs a carriage return to the tile $(r+1,r+1)$ to begin the computation of the next row. This is performed by the subroutine COPY-ROW. Finally we add to the output the coordinates of the last box and the time when it should have been visited, using the subroutine END-ROUTINE.

Note that Out, the output of the algorithm, is a write-only variable. Therefore this algorithm can be performed using a transducer. By construction, the output cell of the output gate is visited if and only if the output value of the circuit is $1$ (remember that the output cell of a gate is never visited if the output is $0$ - see Figure~\ref{fig: esq caja}).

\begin{description}  
   \item[GATE$(r)$:] This subroutine reads the input for the type of the $r$-th gate, then returns the description of this gate to be written in the tile $(r,r)$. If the gate is an input gate, the description also includes the input value of the gate.

   Its complexity is $\mathcal{O}(n)$ in time (to scan the input) and $\mathcal{O}(\log{n})$ in space (to write $r$).
   \item[FIND-PRED$(r,c)$:] This subroutine finds the value of $l_c$ and $r_c$ to check whether gate $g_r$ is a predecessor of $g_c$ in the circuit. It outputs $1$ if $l_c=r$, $2$ if $r_c=r$ and $0$ otherwise. Its complexity is $\mathcal{O}(n)$ in time (to scan the input) and $\mathcal{O}(\log{n})$ in space.
   \item[CROSS$(P, r,c)$:] This subroutine simply outputs the device of the corresponding type as seen in Figure~\ref{fig:Tpe}, using constant time and space.
   \item[COPY-ROW$(r)$:] This subroutine writes a row of COPY gates below the tiles $(r,r+1), \dots, (r,n)$. It has complexity $O(n)$ in time and $O(\log(n))$ in space (to keep the value of the counter $r+1 \to n$)
   \item[END-ROUTINE$(f)$: ] This subroutine adds the last wire of length $t = n^2T$, and the additional information needed by the encoding: $(x_1,x_2) = (n,n)$ the coordinates of the output gate, $(0,0)$ the initial coordinates of the Turmite, $d$ the initial direction of the Turmite, and $t = n^2T$. Its complexity is $O(\log(n))$ in time and space.
\end{description}

\paragraph{Time complexity.}
The subroutines FIND-PRED and CROSS are called up to $n^2$ times; subroutines GATE and COPY-ROW are called $n$ times; and subroutine END-ROUTINE is called once. All these subroutines have time complexity $O(n)$. Therefore the time complexity of the reduction algorithm is $\mathcal{O}(n^3)$.

\paragraph{Space complexity.}
The algorithm uses up to $4\log{n}$ memory bits outside of the subroutines, and each subroutine has space complexity $\mathcal{O}(\log(n))$. Therefore the space complexity of the algorithm is $\mathcal{O}(\log{n})$.

This concludes the proof.
\end{proof}

\section{Discussion}\label{sec:discusion}

The results presented here improve and generalize those found in \cite{GGM01} and \cite{GGM02}, where the universality and P-completeness of Langton's ant had been shown. In the case of Turing-universality, the construction has been simplified through the use of radius 1/2 cellular automata, and has been modified in order to have a periodic background (except for the finite perturbation which encodes the input). The latter puts the result in line with the standards of the literature, akin to the well known proof of universality for the elementary cellular automaton 110 in~\cite{Cook04a}. In the case of P-completeness, the proof is much more rigorous and verifies that the reduction is indeed log-space.

Moreover, the results apply to all non-trivial Turmites, despite their very different behaviors. The key to achieving this generality is that all the constructions require the Turmite to visit any given cell at most two times; since non-trivial rules have at least both a $R\rightarrow L$ and a $L\rightarrow R$ transitions, the states can be chosen to comply with the requirements of the different gadgets. A case that we have not considered here is that of ``semitrivial'' rules, characterized by an infinite word of the form $L^nR^\omega$, for some $n$. However, their dynamics appear to be heavily limited: $R$ cells are needed if we want the Turmite to move at all, but afterwards they turn into a wall that freezes any further movement. Hence, we do not expect universality nor P-completeness from them.

A possible further contribution of this article is the isolation of the abstract logic of the construction, in the form of the cables and boxes system, from the technical realization of this system in the particular case of Turmites. Since both the universality and P-hardness depend only on the abstract construction, we hope that the same setting might be useful for similar results in other systems (e.g., other 2D Turing machines, or more generally, other systems of simple agents interacting with some environment). To this end, we tried to keep the devices at a bare minimum.

\bibliographystyle{alpha}
\bibliography{biblio}

\end{document}